\documentclass[sigconf]{acmart}

\usepackage{booktabs} 
\usepackage{arydshln} 
\usepackage{graphicx}
\graphicspath{ {images/} }
\frenchspacing 
\usepackage{pdfpages}
\usepackage{blindtext}
\usepackage{hyperref}
\usepackage{subcaption}
\usepackage{amsmath}
\usepackage{bm} 
\usepackage{amsmath, mathrsfs,amsfonts,amssymb}
\usepackage{wasysym} 
\usepackage{xfrac} 
\DeclareMathOperator*{\argmin}{arg\,min}
\DeclareMathOperator*{\argmax}{arg\,max}
\usepackage{savesym} 
\savesymbol{openbox}
\usepackage{amsthm} 
\usepackage{thmtools}
\usepackage{cleveref}
\usepackage{commath}
\usepackage[export]{adjustbox}

\setcopyright{rightsretained}

%
%
%

\begin{document}
\title{Task-driven sampling of attributed networks}

\author{Suhansanu Kumar}
\affiliation{%
  \institution{University of Illinois at Urbana-Champaign}
  \city{Urbana}
  \state{Illinois}
  \postcode{60801}
}
\email{skumar56@illinois.edu}

\author{Hari Sundaram}
\affiliation{%
  \institution{University of Illinois at Urbana-Champaign}
  \city{Urbana}
  \state{Illinois}
  \postcode{60801}
}
\email{hs1@illinois.edu}

\begin{abstract}
This paper introduces new techniques for sampling attributed networks to support standard Data Mining tasks. The problem is important for two reasons. First, it is commonplace to perform data mining tasks such as clustering and classification of network attributes (attributes of the nodes, including social media posts). Furthermore, the extraordinarily large size of real-world networks necessitates that we work with a smaller graph sample. Second, while random sampling will provide an unbiased estimate of content, random access is often unavailable for many networks. Hence, network samplers such as Snowball sampling, Forest Fire, Random Walk, Metropolis-Hastings Random Walk are widely used; however, these attribute-agnostic samplers were designed to capture salient properties of network structure, not node content. The latter is critical for clustering and classification tasks. There are three contributions of this paper. First, we introduce several attribute-aware samplers based on Information Theoretic principles. Second, we prove that these samplers have a bias towards capturing new content, and are equivalent to uniform sampling in the limit. Finally, our experimental results over large real-world datasets and synthetic benchmarks are insightful: attribute-aware samplers outperform both random sampling and baseline attribute-agnostic samplers by a wide margin in clustering and classification tasks.
\end{abstract}
\keywords{Sampling; Networks; Data Mining; Clustering; Classifiers}
\maketitle


\section{Introduction}\label{sec:intro}
In this paper, we propose new sampling algorithms for attributed networks. By network attributes, we specifically mean content attributes such as gender, location etc. that are distinct from attributes arising from network structure (e.g. node degree, clustering coefficient).

Sampling networks with the aim of improving data mining task performance is important. Classification~\cite{sorkhoh2008classification}, community discovery~\cite{yang2013community} as well as  clustering of nodes into functional groups \cite{handcock2007model} using node content are familiar data mining tasks on networks. The extraordinarily large size of real-world networks (e.g. Facebook has over a billion nodes) necessitates that we work with a smaller graph sample. To sample, most researchers use well known graph sampling methods such as snowball sampling, or stochastic samplers such as Random Walk, Forest Fire~\cite{leskovec2006sampling} and Metropolis-Hastings Random Walk (MHRW)~\cite{hubler2008metropolis}. There is an implicit assumption that these samplers are ``good enough'' to form representative samples for their task. However, much of the early work on network sampling focused on preserving the structural properties of the network in the sample, not to discover patterns in the node content attributes.

Sampling uniformly at random, over the population is the standard for developing an unbiased estimate of the attribute value distribution and associated statistics including mean and variance. Random access to nodes in a graph is not always available and social networks including Facebook and Pinterest actively prevent such access. Thus we use \textit{link-trace} samplers such as Snowball sampling or Random Walk, where each node added to the sample has a neighbor in the current sample. Indeed MHRW~\citep{hubler2008metropolis} was designed so that the stationary distribution over the graph is uniform. The challenge with MHRW is that for \textit{finite samples} the probability of visiting each node is not uniform.

\textit{If our goal is to cluster or to classify the node content, can we do ``better'' than uniform sampling over the graph?} We motivate this question by an illustrative example.~\Cref{fig:motivation} shows a training set comprising two classes, A and B where each class represented by a set of two dimensional samples; each class is of a different size. Assume that we are trying to learn a discriminative classifier (e.g. SVM). We know from standard Machine Learning theory that the most informative samples to build the SVM lie at the boundary of each class. Uniform sampling of each class A, B, will have uninformative samples picked away from class boundary. In contrast, we would ideally like to pick points near the class boundary. Thus we should expect that for the same \textit{finite} sample size $N$, uniform sampling should have higher generalization error than for those sets of size $N$ containing samples primarily near the boundary for each class. This example motivated our idea of obtaining surprising or extremal samples for clustering and classification tasks.

We specifically look at three tasks---data characterization, clustering and classification. Our contributions are as follows:

\begin{enumerate}
    \item We propose several new \textit{link-trace} samplers grounded in Information Theory. These ``Information Expansion'' samplers seek out previously unseen content samples rapidly covering the attribute range.
    \item We characterize the bias of these Information theoretic samplers, and prove two lemmas: they are biased towards collecting nodes with attribute values absent from the current sample; asymptotic behavior tends towards uniform distribution. In practice, this means that for small sample sizes, information expansion samplers perform stratified sampling, covering more informative samples.
\end{enumerate}

We have interesting results for all three tasks. In all three cases---characterization, clustering and classification---attribute-aware samplers substantially outperform baselines (BFS, RW, MHRW). For data characterization task only, uniform random sampling has the best raw performance (KS statistic). However, this is statistically indistinguishable from content aware link-trace sampling. For clustering task for example, there is an average of 45\% improvement over uniform sampling at a sample size of 5\% for real-world data sets. The improvements are more significant at smaller sample sizes. For example, in Patent network, 5.7\% of the patents sampled via proposed samplers achieve the same clustering performance as 10\% of the patents collected uniformly from the dataset. This amounts to a saving of over 100K nodes in sampling.


\setlength{\belowcaptionskip}{-10pt}
\begin{figure}[t!]
  \centering
\includegraphics[width=0.75\linewidth]{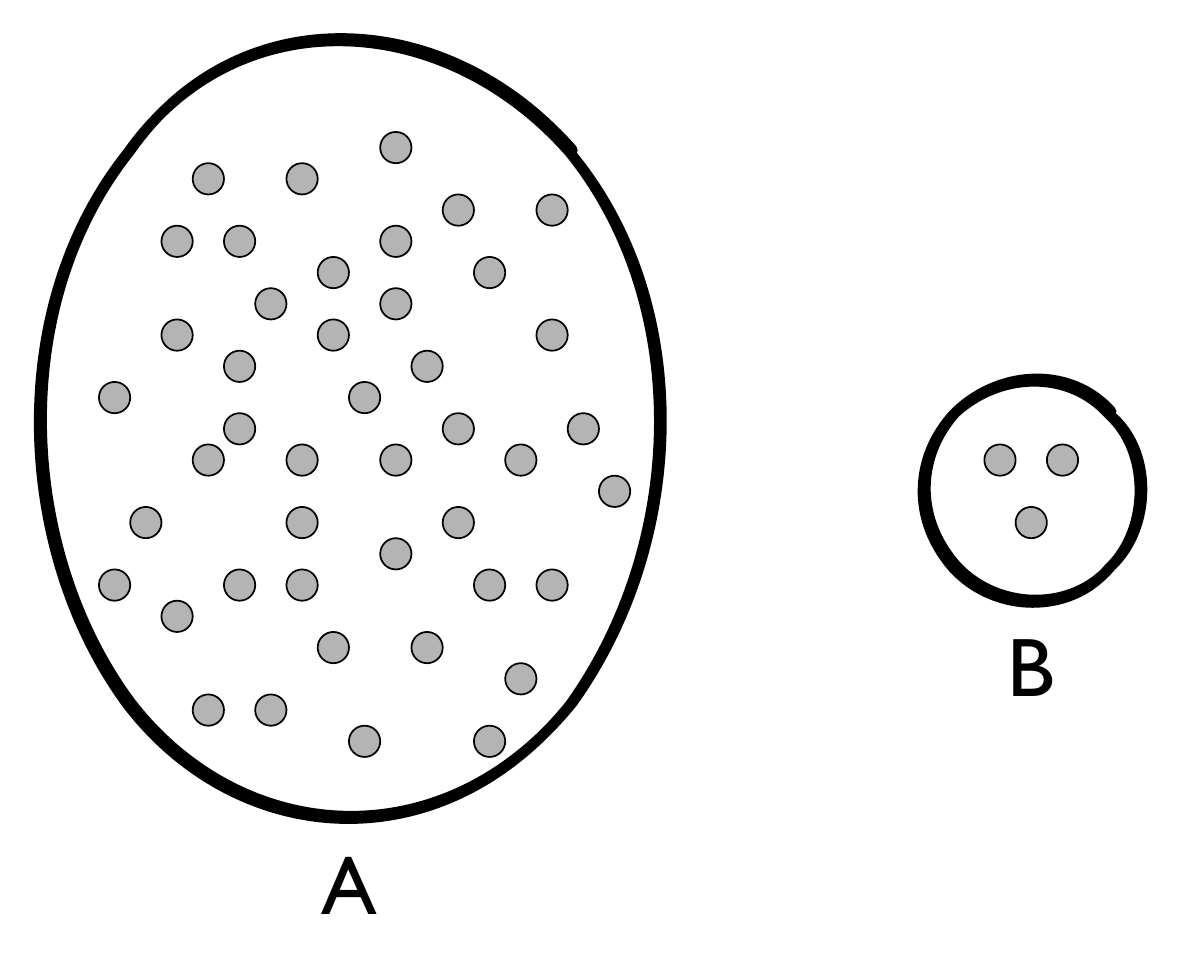}
\caption{Figure shows two classes A and B in the 2D plane. Content nodes at the boundary are the most informative nodes for classification. In support vector machines, they are also called as support vectors. Uniform sampling has two challenges: it collects non-informative samples away from the class boundary, and will under sample or even miss small classes.}
\label{fig:motivation}
\end{figure}

 Furthermore, while samplers such as MHRW have been shown to be asymptotically equivalent to uniform sampling the graph, however,  finite sample statistics of MHRW reveal that for a finite sample, the probability of visiting a node is not uniform over the network.

We show via a stylized example the differences between two link-trace samplers---MHRW and  Information eXpansion Sampling (IXS). \Cref{fig:clarifyingexample1} shows an attributed network with a strong community structure and having a single discrete attribute; the different colors in the graph refer to different attribute values. The two sub-figures show a single trace of size equal to 10\% of the network size, of two algorithms (MHRW, Information eXpansion Sampler) starting from the same seed node (marked in red). The sampled nodes are marked with a dark black ring, and the edges of the induced subgraph are colored black. As can be seen from~\Cref{fig:clarifyingexample1}, MHRW gets ``stuck'' in a small section of the network even though it has an asymptotically optimal performance characteristic. For a small sample size, MHRW has a known bias towards low-degree nodes. Note that IXS with its bias towards capturing new attribute values is much more efficient at covering the attribute space. In a similar vein illustrated through \Cref{fig:clarifyingexample2}, we can show that IXS is superior to XS~\cite{maiya2011benefits} (an attribute-agnostic sampler that performs well with networks with community structure) for dis-assortative attributed-networks with poor community structure. In summary, our proposed Information eXpansion Sampling algorithm expands rapidly in the content space when there are attributes to be discovered, but more like a random walker if the information in the network neighborhood of the sample fails to provide guidance.
\setlength{\belowcaptionskip}{-10pt}
\begin{figure}
\begin{subfigure}{.25\textwidth}
  \centering
  \includegraphics[width=1\linewidth]{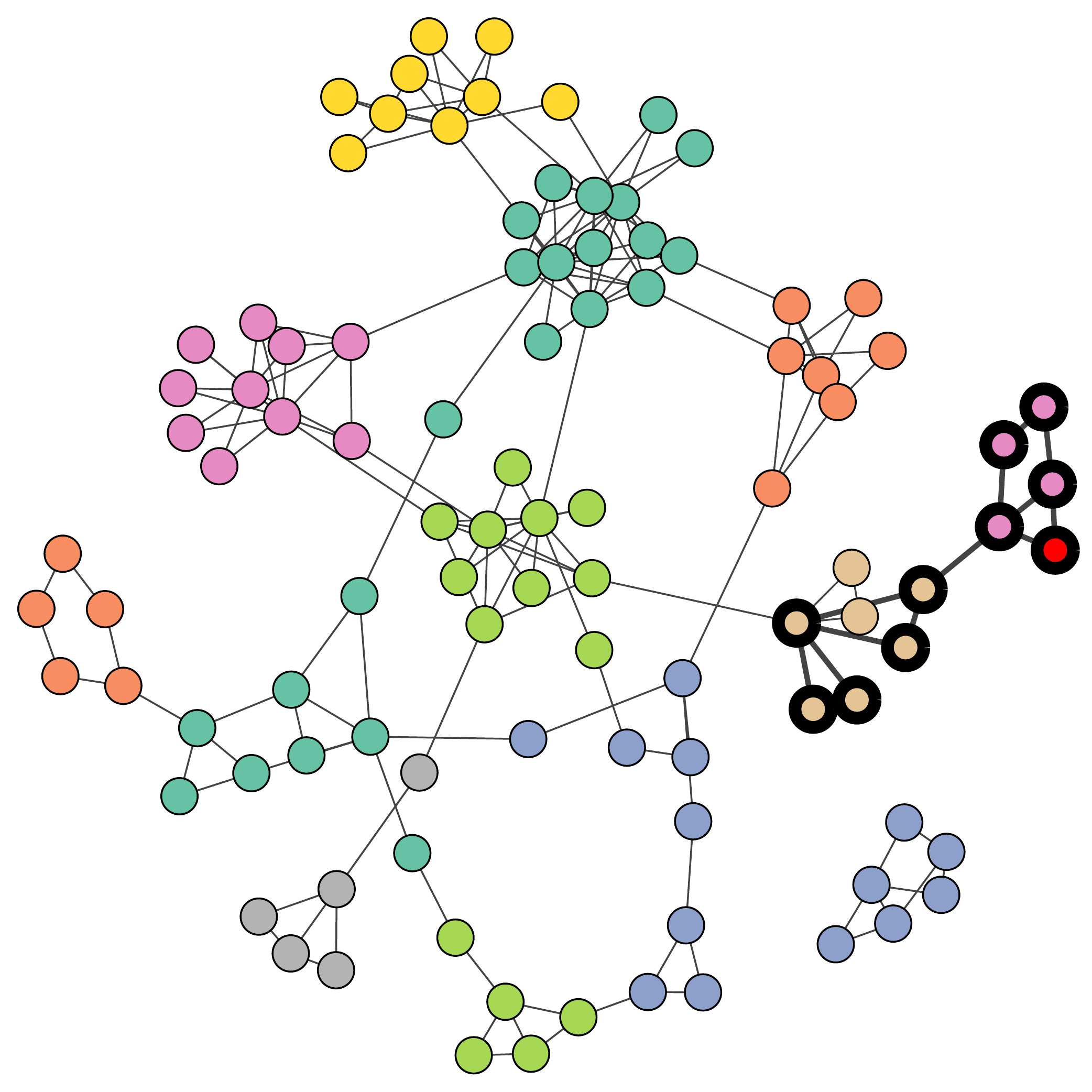}
  \caption{a MHRW run}
  \label{fig:sfig1}
\end{subfigure}%
\begin{subfigure}{.25\textwidth}
  \centering
  \includegraphics[width=1\linewidth]{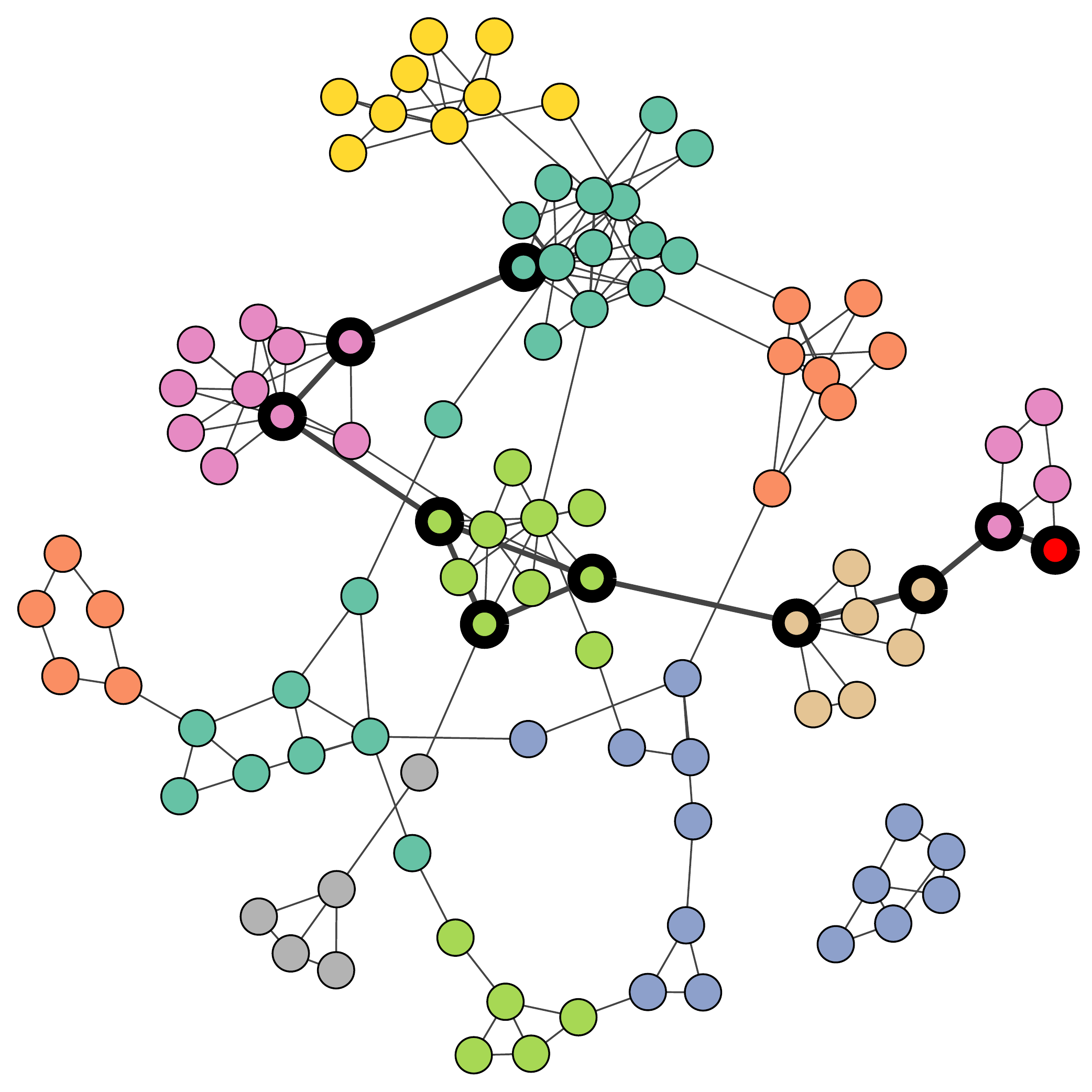}
  \caption{an IXS run}
  \label{fig:sfig2}
\end{subfigure}
\caption{The Figure shows a network where the different colors represent the different attribute values; sampled nodes with darker concentric rings; the target sample size is 10\%. Subplot (a) shows MHRW, a random walk based sampler,  getting stuck in a local part of the network, due to its bias. Subplot (b) shows data aware sampler IXS overcoming this bottleneck due to its bias for new information.}
\label{fig:clarifyingexample1}
\end{figure}
\setlength{\belowcaptionskip}{0pt}

The rest of this paper is organized as follows. In the next section we formally define the sampling problem. In~\Cref{sec:How to Sample}, we discuss attribute-agnostic and attribute-aware papers and introduce our information expansion based samplers. In the three following sections, we present results for synthetic and real-world datasets for baseline and our attribute-aware samplers for data characterization, clustering and classification tasks. In~\Cref{sec:relatedwork}, we discuss prior work and in~\Cref{sec:Limitations}, we discuss limitations. We present our conclusions in~\Cref{sec:conclusion}.

 \begin{figure}
 \begin{subfigure}{.25\textwidth}
   \centering
   \includegraphics[width=1\linewidth]{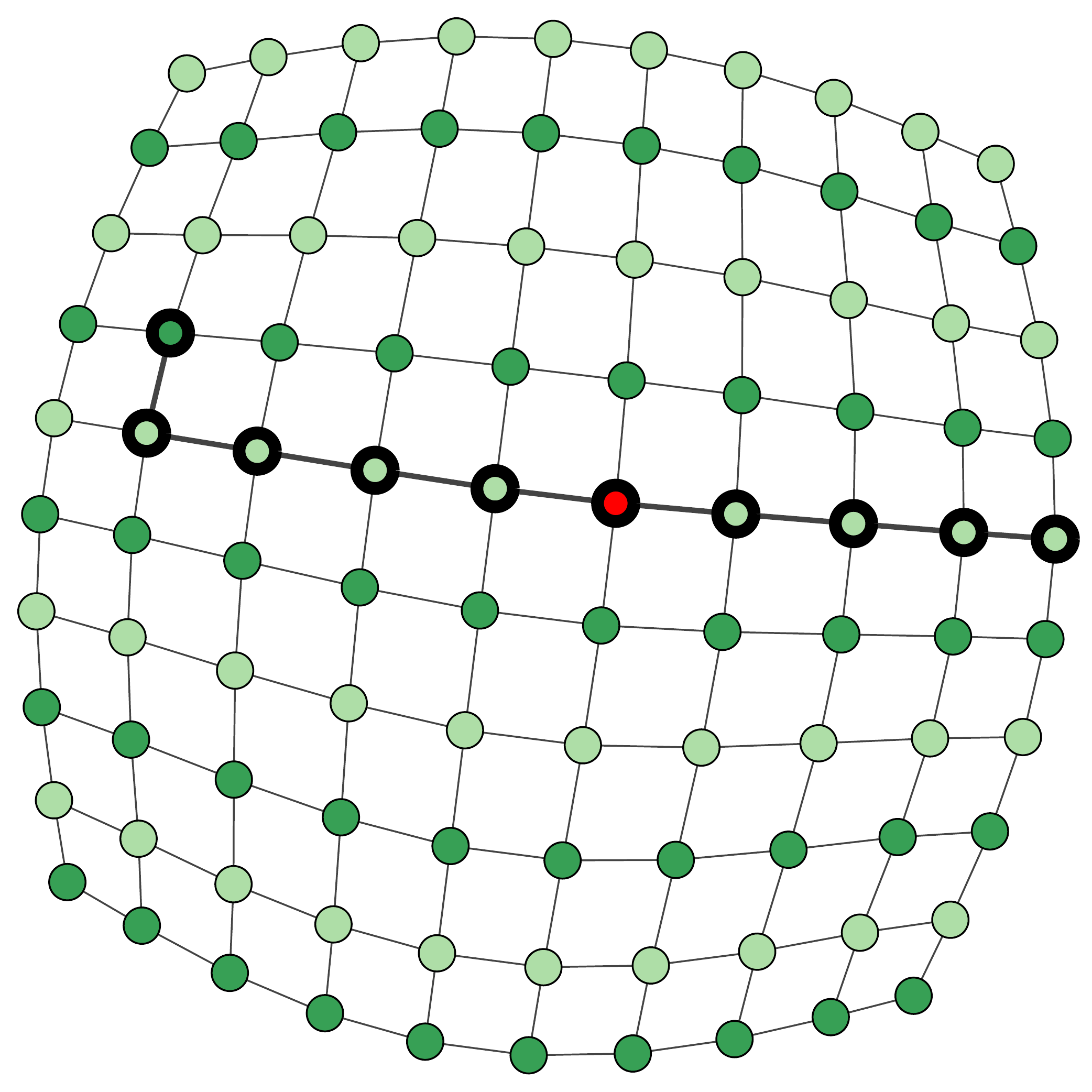}
   \caption{a XS run}
   \label{fig:sfig1}
 \end{subfigure}%
 \begin{subfigure}{.25\textwidth}
   \centering
   \includegraphics[width=1\linewidth]{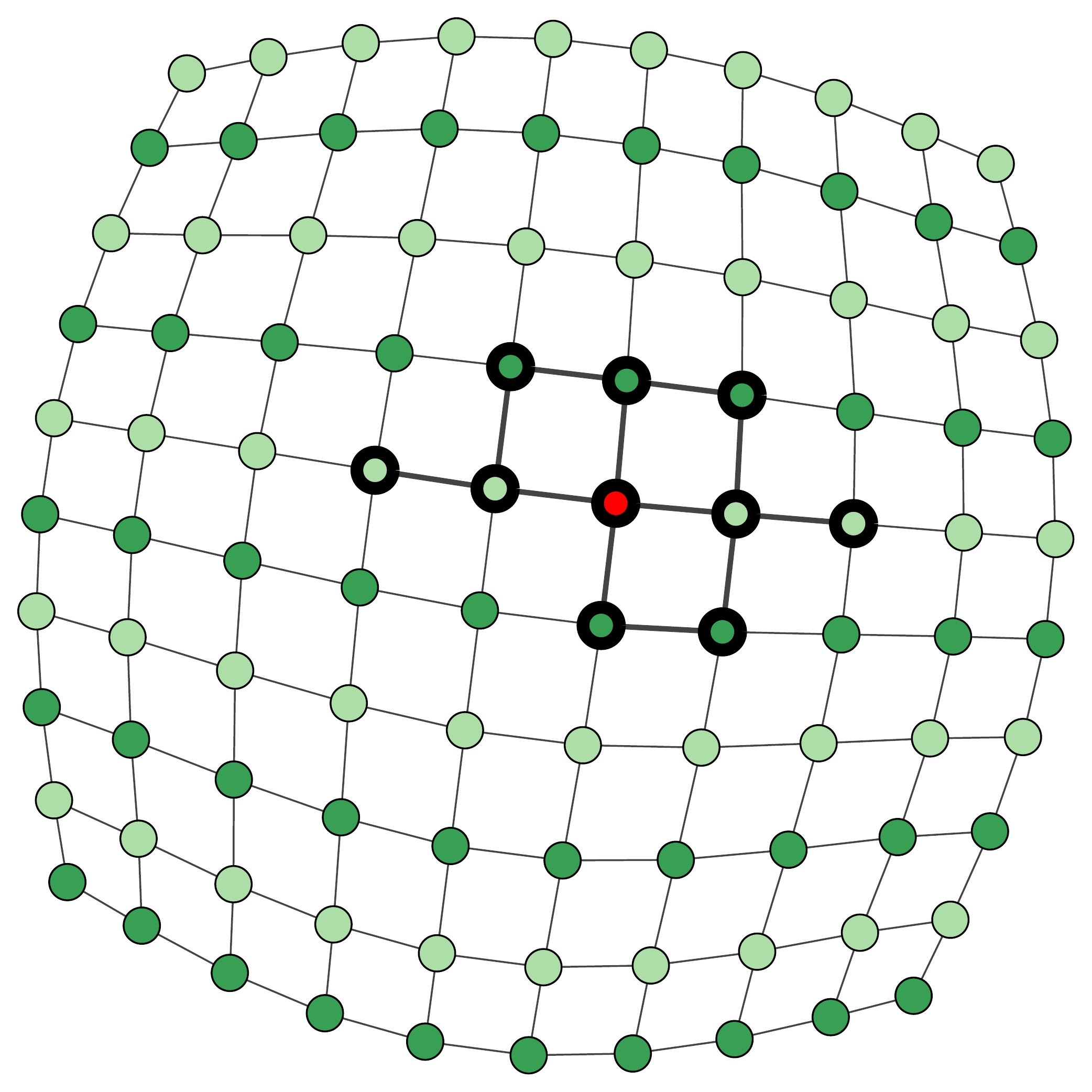}
   \caption{an IXS run}
   \label{fig:sfig2}
 \end{subfigure}
 \caption{The Figure shows a network where the two shades of green represents the two different attributes of the nodes. Subplot (a) shows expansion sampling (XS) behaves poorly when the attribute color has low assortativity of 0. Subplot (b) shows attribute awareness while sampling can alleviate this problem. The two shades of green represent the different attribute values.}
 \label{fig:clarifyingexample2}
 \end{figure}


\section{Preliminaries}
\label{sec:Preliminaries}

In this section, we shall first define the notation used in the paper (\Cref{tab:notationstable}), then we shall formally define the task-driven (or purposeful) sampling problem in~\Cref{sub:Problem Statement}. We shall conclude the section with a discussion of the real-world datasets used as well as the mechanisms to create realistic attributed datasets.

%


\begin{table}[h]
\center
    \caption{Notation used in the paper}\label{tab:notationstable}
\begin{tabular}{@{}c l@{}} \toprule
    \textbf{Symbol} & \textbf{Definition}\\ \midrule
$G$  & Network, $ G = (V, E)$ \\
$V$  & Vertices of the network \\
$E$  & Edges of the network, $E \subseteq V \times V$ \\
$v$  & vertex(node) in network \\
$d_v$  & degree of node $v$ in network \\
$d_V  $  & mean degree of nodes $v \in V$: $d_V= \frac{1}{|V|}\sum_v d_v$ \\
$A(v)$  & Attribute vector or content of a node $v$ \\
$\mathbb{S}$  & Sampled nodes \\
$N(\mathbb{S})$  & Frontier node set or neighborhood of set $\mathbb{S}$ \\
 & $\{w \in V \setminus \mathbb{S} : \exists v \in \mathbb{S} : (v, w) \in E\}$\\
$\Delta(v)$  & Unexplored neighbors of a node $v$ \\
 & $\{w \in N(v) \setminus (\mathbb{S} \cup N(\mathbb{S})) \}$\\
$\mathbb{C}$  & Set of content clusters in G. \\
$k$  & Number of content clusters in G. \\ \bottomrule
\end{tabular}

\end{table}

\subsection{Problem Statement}
\label{sub:Problem Statement}

Assume that we have a graph $G = (V, E)$, where each node has $m$ content attributes (e.g. gender, location, etc.) and that we have a task $F$ that performs operations using an attributed graph as input. The task $F$ produces an output: a data characterization in the form of a scalar (e.g. mean of an attribute), or a vector (e.g. distribution of gender);  a mapping (e.g. assignment of each node to a cluster); a function (e.g. a classifier that operates on further input).

Further assume that the input to $F$ is a sample $\mathbb{S}$ of size $z \ll |V|$. The sample $\mathbb{S}$ can be obtained through a variety of ways, for example, random sampling of nodes. However, in practice  random access to the nodes is rare. Instead, most network sampling mechanisms are \textit{link-trace} samplers. We define link trace sampling in a manner similar to~\citep{maiya2011benefits} as follows. Given an integer $z$ and an initial seed node $v \in V$ to which $\mathbb{S}$ is initialized, a link trace sampler $L$ adds node $v$ to $\mathbb{S}$ such that there exists a node $w \in \mathbb{S}$  where $(w,v) \in E$. The sampler stops when $|\mathbb{S}|=z$.

Link-trace samplers yield connected components since each new addition must lie in $N(\mathbb{S})$, the neighborhood of $\mathbb{S}$. We ran the link-trace samplers on largest component of the graph leading to a sample collection of $z$ nodes from the $|V|$ nodes of the underlying network. 

This sample $\mathbb{S}$ is associated with an induced subgraph $G'_{z}=(V_{z},E_{z})$. Thus for a given sample size $z$, seed nodes $\theta$ and a task $F$, the goal is to find an optimal link-trace sampler $L^*$ such that,
\begin{align}\label{eq:objective}
L^* = \argmin_L \mathbb{E}_\theta  D \left ( F(G), F(G'_{z}; L, \theta) \right)
\end{align}

The  function $D$ measure the distance between the outputs for task $F$ in the ideal case with the entire graph $G$ as input against the case when the sampled graph $G'_{z}$ is used as input. The graph $G'_{z}$ is parameterized by the sampler $L$ and the seed set $\theta$. The distance measure is task dependent: $D$ could be just the absolute difference in values, say when $F$ is computing the mean of an attribute, the KS statistic in the case when $F$ computes a distribution, or Normalized Mutual Information when $F$ performs clustering.

There may be two sources of randomness involved in sampling, depending on the type of sampler. The first source is the location of the seed set and the second may be the sampler itself. For stochastic samplers such as Random-Walk, given the same seed, every run of the algorithm will produce a different sample. The expectation $\mathbb{E}$ is over both sources of information, although~\Cref{eq:objective} only refers to the expectation over $\theta$. Thus,~\Cref{eq:objective} simply says that we should select the link trace sampler with minimum distance to the ideal case, averaged over different seed sets and over different link traces.

\subsection{Datasets}
\label{sub:Datasets}
\begin{table}[b]
\center
\caption{Network statistics. N: number of nodes, E: number of edges, DS: number of discrete attributes, CT: number of continuous attributes, $d_V$: average node degree, CC: clustering coefficient, DIA: diameter}\label{table:net}
\resizebox{\columnwidth}{!}{%
\begin{tabular}{@{}rrrcp{0.25em}cccp{0.5cm}@{}} \toprule
 Networks & N & E & DS & CT & $d_V$ & CC & DIA \\ \midrule
 Facebook & 4,039 & 88,234 & 3 & 0 & 43.69 & 0.27 & 8 \\
 Enron & 36,692 & 183,831 & 0 & 7  & 10.02 & 0.72 & 13\\
 Patent & 2,738,012 & 13,963,839  & 4 & 3 & 10.20 & 0.09 & 22 \\
 Pokec & 1,138,314 & 14,975,771  & 2 & 0 & 13.16 & 0.054 & 11\\
 Philosopher & 1,564 & 26,761 & 7,969 & 0 &  17.11 & 0.37 & 7 \\
 Twitter & 81,306 & 1,768,149 & 33,569 & 0 & 21.74 & 0.064 & 7 \\
 Google+ & 107,614 & 13,673,453 & 690 &0 &  127.04 & 0.66 & 6 \\
\bottomrule
\end{tabular}
}
\end{table}

In this section, we discuss the real world datasets and generators to synthesize attributed network datasets.

We consider an assortment of five real-world datasets from varied domains: Facebook, Patent, Enron, Pokec and Wikipedia. The networks differ in size, and in key network parameters: degree distribution, diameter and clustering coefficient. See~\Cref{table:net} for a summary. The networks also differ in  attribute cardinality, attribute type (discrete vs. continuous attributes), data skew and assortivity (e.g. Patent category is most assortative with value 0.64). See ~\Cref{table:attr}.

The Facebook network~\cite{mcauley2012learning} is a friendship network. This network has discrete attributes of moderate cardinality and low assortativity (maximum assortativity is 0.34 for locale). 

The patent network~\cite{leskovec2005graphs} is the citation network of all patents granted by the US from 1963 till 1999. The attributes have high discrete cardinality for some of the attributes such as country of origin and continuous attributes like claims and citations have range over thousands. Most of the attributes are dis-assortative with exception of category and assignee type whose the assortativity values are 0.64 and 0.25 respectively. 

In the Enron network~\cite{leskovec2009community}, each node is an individual and edges represent communication between the corresponding individuals. The attributes vary greatly in range but have low assortativity values. 

Pokec \citep{takac2012data} is another social network from Slovakia. We use two discrete attributes: ``age'' and ``gender''. These attribute have low cardinality. The attribute ``gender'' is dis-assortatively mixed (-0.12) while ``age'' groups are homophilic (0.366). 

The Wikipedia network~\cite{ahn2010link} is an information network connecting philosopher pages in Wikipedia. Two philosophers share an attribute if they point to another non-philosopher page in Wikipedia. We treat a non-philosopher as an attribute if at least five philosopher pages cite it. The dataset is unique: the number of attributes per node is greater than the number of nodes; each attribute is boolean and asymmetric (i.e. one value is much more likely than the other). We perform our experiments on the largest component of the undirected versions of these networks. ~\Cref{table:attr} describes the attributes with their properties used in our experiments.

 \begin{table}[h!]
 \center
 \caption{Attribute statistics for three networks: Facebook, Patent and Enron. Type(TP): Continuous(CT) or Discrete(DS); CD : attribute cardinality; SW: Skew; CV: attribute coverage over the nodes in the network; AS: assortativity. The coverage of continous attribute is measured as their coverage of 10 log spaced bins. We don't show the attributes of Wikipedia due to its large size (7969). Due to very dense categorization in the original dataset, we use the ``subcat'' from the original dataset as our category and classes of the patent classes as subcategories.}\label{table:attr}
 \resizebox{\columnwidth}{!}{%
 \begin{tabular}{@{}r r r r r r@{}}\toprule
  Attribute & TP & CD & SW & CV & AS\\ \midrule
  \textit{Facebook} \\
  gender & DS & 2 & 0.06 & 97.92 & 0.09\\
  locale & DS & 10 & 0.69 & 98.56 & 0.34 \\
 education type & DS & 3 & 0.52 & 74.60 & 0.08 \\ \midrule

  \textit{Patent} \\
  category & DS & 36 & 0.06 & 100.0 &  0.64 \\
  sub-category & DS & 58 & 0.52 & 100.0 & -0.02 \\
  assignee type & DS & 7 & 0.40 & 100.0 & 0.25 \\
  country of origin & DS & 69 & 0.70 & 100.0 & 0.20\\
  citations made & CT& [0, 254] & 1.00 & 100.0 & -0.00\\
  citations received & CT& [0, 2142] & 0.31 & 100.0 & 0.06 \\
  claims made  & CT& [0, 263] & 1.00 & 100.0 & 0.04\\ \midrule

  \textit{Enron} \\
 AvgContentLength & CT & [0, 9296] & 1.00 & 100.0 & 0.02 \\
 AvgContentReplyCount & CT & [0, 1238] & 0.92 & 100.0 & 0.10 \\
 AvgNumberTo & CT & [0, 2653] & 0.95 & 100.0 & -0.03 \\
 AvgContentForwarding & CT & [0, 1004] & 0.85 & 100.0 & 0.10\\
 AvgNumberCc & CT & [0, 1827] & 0.99 & 100.0 & 0.04\\
 AvgRangeBetween2Mails & CT & [0, 10313] & 1.00 & 100.0 & -0.00\\ \midrule

\textit{Pokec} \\
 gender & DS & 2 & 1.3e-4 & 100.0 & -0.12\\
 age (group) & DS & 23 & 0.38 & 100.0 & 0.366\\

 \bottomrule  
 \end{tabular}
 }

 \end{table}

We decided against using Google+ and Twitter, cited in~\citet{yang2013community}, since these datasets have significant number of nodes with missing attribute values; missing values creates a confound since we don't know if the missing values are due to improper sampling of the original graph. Hence we've used real-world network datasets with the fewest missing attributes.

We now discuss the synthetic attributed-network generation. There are three elements to synthetic network generation: the network structure, the attributes and the relationship between attributes and network structure.

For the network generation, we use the Lancichinetti-Fortunato-Radicchi (LFR) \cite{lancichinetti2009benchmarks}  algorithm to generate artificial networks of size $N=1000$, with mixing coefficient $\mu=0.1$ that resemble real world networks. with strong community structure. Such networks are referred to as LFR($\mu$ =0.1), in the rest of the paper.

There are three essential data characteristics: skew, purity and assortivity. Assume that we have a single discrete attribute that takes on $k$ values; this discussion is easily extended to multiple discrete attributes and to continuous attributes. Now, in the discrete one dimensional case, all data points sharing the same distinct attribute value will be grouped together into one cluster,  $C_k$.

The data skew $s(\mathbb{C})$ ($=1 - H(\mathbb{C})/H_{max}$) of a set of clusters $\mathbb{C} = \{C_1, C_2, \dots, C_k \}$ is defined in terms of Shannon entropy $H(\mathbb{C})$ over the cluster size. We shall use three discrete skew values of (low $\approx 0$, medium $\approx 0.22$, high $\approx 0.52$) when generating the attributed network. The purity $p$ of the data refers to the separability of the data clusters; this is parameterized by the standard deviation of the continuous variable and is easily extended to discrete~\cite{pelechrinis2014matching}. We use two extreme purity values (low $\approx 0.2$ and high $\approx 10$) to synthesize the network. Assortativity ($a$) measures the degree to which nodes similar attributes are connected to each other that is significantly different from random matches~\cite{newman2003mixing}. We use two levels (low $\approx 0$, high $\approx 1$) of assortivity, where the low case corresponds to random assignments of attributes and the high case corresponds to $a \approx 1$. We note that negative assortivity is hard to achieve when the number of attribute values is large since cross attribute edges are similar to random matches.

Finally, we assign the synthesized attributes to the nodes in the network according to the specified assortative value through a label propagation algorithm, which is terminated when the target assortivity is achieved. We use the principle of \textit{swapping} and \textit{propagation} to map the attributes (content) onto the network. In the swapping technique, an extreme (high, low) assortative distribution over network is first generated using the community detection and approximate k-coloring problem respectively. Randomly swapping categorical attribute between pair of vertices causes assortativity to tend to zero which is stopped when the target assortativity is achieved. The propagation algorithm propagates the same category with a proportional high probability if the target assortativity is high and vice versa.

In this Section, we defined all the symbols used in the paper, and formally defined the task-driven sampling problem; we specifically focus on link-trace samplers in this paper. Then we discussed the real-world and synthetic datasets used in this paper. In the next Section, we discuss different network sampling methodologies.

\section{Sampling Attributed Networks}
\label{sec:How to Sample}

Let us denote the sample set of nodes collected from the network as $\mathbb{S}$. Frontier nodes, denoted as $N(\mathbb{S})$, are the set of nodes that have at least one neighbor in $\mathbb{S}$. We define $\Delta v$ to be the neighbors of node $v \in N(\mathbb{S})$ that do not belong to $\mathbb{S}$. Furthermore, we are primarily interested in acquiring a representative sample of \textit{node content} : attributes of a node unrelated to node structure; attributes such as location, gender, education level.

The set of sampling algorithms proposed in this paper fall under the description of link-trace sampling. In such sampling schemes, the next node $v$ selected for inclusion in the set $\mathbb{S}$ is a neighbor of at least one node in $\mathbb{S}$. In other words, $v \in N(\mathbb{S})$. We keep adding nodes until we have collected a target number of nodes $|\mathbb{S}|$; typically $|\mathbb{S}| \ll N|$, where $N$ is the number of nodes in the graph.

The mechanism of node addition (link trace vs. uniform sampling) has important implications on the sampled node content quality. Link trace sampling is essential when random access to individual nodes is unavailable. Social networks such as Facebook or Pinterest prevent random node access; notice that link-trace sampling techniques (e.g. BFS, Random Walk) are primarily used to crawl the World Wide Web. Furthermore, even if there is a unique numeric \texttt{id} associated with each individual (e.g. Twitter), the \texttt{id}s may not be sequential, requiring us to do costly rejection sampling~\cite{Gjoka2010}. The inability to randomly access each node in the graph (e.g. Facebook) implies that sampling uniformly at random for node content is infeasible on these networks. This is important because sampling uniformly at random is critical to creating a representative sample on which we can perform data mining tasks.

We can broadly categorize sampling algorithms as either attribute-agnostic, or attribute-aware. The key difference lies in if the sampling algorithm uses the content attributes (\textit{not related} to network structure) or not. We first discuss attribute-agnostic sampling, followed by a discussion of attribute-aware samplers in~\Cref{sub:Attribute Aware}.

\subsection{Attribute Agnostic}
\label{sub:Attribute Agnostic}

Attribute-agnostic algorithms, including breadth first search (BFS), Forest Fire~\cite{leskovec2006sampling}, Metropolis Hastings Random Walk (MHRW)~\cite{Hubler2008}, Random Walk  (RW) and Expansion Sampling (XS)~\cite{maiya2011benefits}, do not use the attribute (or content) of any node to construct $\mathbb{S}$. Well known sampling algorithms such as Forest Fire~\cite{leskovec2006sampling} ignore nodal content because they were explicitly designed to preserve graph structural properties in the sampled graph including degree distribution, diameter, and densification.

Now, we introduce well known attribute-agnostic sampling algorithms: snowball sampling; ForestFire; expansion sampling; Random Walk (RW) and MHRW. In snowball sampling, uses a small seed set of vertices ($\theta$) to start collection of data through Breadth First Search (BFS). Snowball sampling is computationally efficient, but biased towards high degree nodes~\cite{kurant2010bias} and is sensitive the selection of the seed nodes~\cite{maiya2011benefits}.~\citet{leskovec2006sampling} proposed ForestFire, which explores a subset of a node's neighbors according to a ``burning probability'' $p_f$; at $p_f=1$, ForestFire is identical with BFS.
At each iteration, the algorithm chooses a subset of the neighbors of the current node $v$ using a geometric distribution. While Forest Fire is superior to BFS, it suffers from a degree bias~\cite{kurant2010bias}.~\citet{maiya2011benefits} proposed expansion sampling (XS), motivated by expander graphs. XS adds nodes in a greedy manner in the direction of the largest unexplored region. That is, we a add a node $v^\ast$ to $\mathbb{S}$ when 
 \begin{equation}\label{eq:XS}
 v^\ast = \argmax_{v \in N(\mathbb{S})} |N(v) - (\mathbb{S} \cup N(\mathbb{S}) )|.
 \end{equation}

In other words XS finds that node $v^\ast$ that has the largest number of neighbors outside of the set $\mathbb{S} \cup N(\mathbb{S})$.~\citet{maiya2011benefits} suggest that XS relatively insensitive to seed set. While XS will rapidly discover homogeneous communities, it does less well over disassortative networks since the sampling algorithm is attribute agnostic. Re-weighted Random Walk sampling (RW) is a variant of the classic Random Walk algorithm, re-weighted to provide a better estimate of the content distribution. The re-weighting is necessary since the random walk algorithm has a high degree node bias. Notice that the stationary probability $\pi_v$ of visiting a node $v$ is proportional to the node degree. Hence the re-weighting discounts the label associated with the node $v$ of degree $d_v$, by its degree; attribute probabilities are estimated through the Hansen-Hurwitz estimator \cite{hansen1943theory} to develop an unbiased estimate of the content. Assuming an attribute $A$ can take values ($A_1$, $A_2$, ... $A_r$) with the corresponding groups; $\cup_1^r A_i = V$, the unbiased probability distribution ($\widetilde{p}$) estimate of any discrete attribute $A$ from a RW sampled collection is :
 \begin{align}
 \widetilde{p}(A_i) = \frac{ \sum_{u \in A_i} 1/d_u }{ \sum_{u \in V} 1/d_u }
 \end{align}
 Similarly, we can use kernel density estimators for continuous attributes.

Metropolis-Hasting random walk sampling (MHRW) has an important asymptotic property: the stationary distribution is uniform over all the nodes. Thus in principle, MHRW is equivalent to uniform random sampling of the graph. MHRW achieves the uniform stationary distribution by altering the transition probabilities between pairs of nodes. One important concern: less than ideal finite sample behavior.  Poor finite sample behavior is observable on graphs with strong community structure, causing the MHRW to get stuck in a local community. MHRW typically requires sample sizes of $O(N)$, where $N$ is the number of nodes in the graph, to achieve the stationary distribution. The finite sample performance of MHRW becomes problematic for typical sample sizes for internet scale graphs (e.g. Facebook has over a billion users) that are an  order of magnitude smaller than $N$ (i.e. $|\mathbb{S}| \approx 0.05 \times N$; 5\% of $N$).

\subsection{Attribute Aware}
\label{sub:Attribute Aware}
Attribute-aware samplers use node attributes (content) to determine the sample set $\mathbb{S}$. These samplers determine the next node $v$ to be added to the current sample set $\mathbb{S}$, by checking the content of the node against the content of the nodes in the current sample. At any point in time, we have the set $\mathbb{S}$, comprising nodes in the current sample set, $N(\mathbb{S})$, the set of all frontier nodes who have at least one neighbor in $\mathbb{S}$. At each step, we shall  add to $\mathbb{S}$,  one optimal node $v \in N(\mathbb{S})$. We shall assume that for each $v \in N(\mathbb{S})$, we shall have access to the content of the neighbors of $v$. This is similar in spirit to Expansion Sampling (XS) proposed by~\citet{maiya2011benefits}. We call the set of neighbors of $v$, that do not belong to $\mathbb{S}$, the candidate set for node $v$ and shall designate it with $\Delta v$.

The rest of this section organized as follows. In the next section, we introduce the idea of surprise, grounded in Information Theory and develop  algorithms that incorporate surprise to sample network content. Then, in~\Cref{subs:Extremal Point Sampling}, we introduce the idea that extremal points---ones that are far away from all the points in the current sample---are the most informative.
\subsubsection{Surprise Based Sampling}
\label{subs:Surprise Based Sampling}
Surprise based samplers compute the extent to which the distribution of attribute values in the candidate set $v \cup \Delta v$ is predicted by the set $\mathbb{S}$.

Balanced sampling (BAL) is the simplest surprise-based sampler that adds one node from the frontier at a time; the attributes of the selected node have low probability of occurrence in the sample $\mathbb{S}$.

In Information Expansion Sampling (IXS), surprise $I_{\Delta v}$ of a candidate set $v \cup \Delta v$ (with respect to $\mathbb{S}$) is computed as follows:
\begin{align}
I_{\Delta v} =& \frac{- \ln P(\Delta v | \mathbb{S})}{|\Delta v|} \nonumber\\
=& -\sum_{i=1}^r p_{\Delta v}(i) \ln p_{\mathbb{S}}(i) \label{eq:surprise}
\end{align}
where, $r$ is the number of distinct attribute values, $p_{\Delta v}(i)$ is the probability of attribute value $i$ in the candidate set $v \cup \Delta v$, and $p_{\mathbb{S}}(i)$ is the probability of the attribute value $i$ in the sample set $\mathbb{S}$. IXS  expands to rapidly discover unseen attribute values. Notice that unseen attribute values  (i.e. $p_{\mathbb{S}}(i)=0$) in $\mathbb{S}$ will cause~\Cref{eq:surprise} to diverge; IXS reduces to BAL when $\Delta v = \phi$. Selection of nodes from the neighborhood set $N(\mathbb{S})$ in a manner that maximizes surprise~\Cref{eq:surprise} results in interesting sampler behavior over time. We identify the behavior through two lemmas.

\begin{lemma}\label{lemma:coverage}
    The IXS sampler prefers nodes with unseen attribute values to nodes with attribute values present in $\mathbb{S}$.
\end{lemma}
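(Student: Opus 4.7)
The plan is to exploit the behavior of the cross-entropy expression in \Cref{eq:surprise} when the sample distribution $p_{\mathbb{S}}$ has zero mass on some attribute value. Concretely, I would first observe that the IXS rule selects a frontier node $v^\ast \in N(\mathbb{S})$ that maximizes $I_{\Delta v}$, and then compare the values of $I_{\Delta v}$ for two kinds of candidate sets: (i) those whose attribute support is contained in the support of $p_{\mathbb{S}}$, and (ii) those that contain at least one attribute value absent from $\mathbb{S}$.

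First I would fix notation: for a candidate frontier node $v$, let $S_v = \{i : p_{\Delta v}(i) > 0\}$ denote the attribute values that actually appear in $v \cup \Delta v$, and let $S_{\mathbb{S}} = \{i : p_{\mathbb{S}}(i) > 0\}$. For a node $u$ of type (i), every term $-p_{\Delta u}(i)\ln p_{\mathbb{S}}(i)$ in \Cref{eq:surprise} is finite because $p_{\mathbb{S}}(i)>0$ on the support of $p_{\Delta u}$, so $I_{\Delta u}$ is a finite real number bounded above by $\ln |S_{\mathbb{S}}|/\min_{i \in S_{\mathbb{S}}} \ln(1/p_{\mathbb{S}}(i))$-type quantities; the precise bound is not needed, only finiteness. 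For a node $w$ of type (ii), there exists $i^\ast \in S_w \setminus S_{\mathbb{S}}$ with $p_{\Delta w}(i^\ast) > 0$ and $p_{\mathbb{S}}(i^\ast)=0$, so the corresponding term $-p_{\Delta w}(i^\ast)\ln p_{\mathbb{S}}(i^\ast)$ diverges to $+\infty$, forcing $I_{\Delta w} = +\infty$ (all other terms are nonnegative, since $p_{\mathbb{S}}(i)\le 1$ makes $-\ln p_{\mathbb{S}}(i)\ge 0$, so nothing can cancel the divergence).

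From this dichotomy, the argmax in the IXS selection rule must prefer any type-(ii) node over every type-(i) node whenever a type-(ii) node exists in $N(\mathbb{S})$, which is exactly the statement of the lemma. I would close the argument by briefly addressing the edge case in which the candidate set $\Delta v$ is empty: the text specifies that IXS then reduces to BAL, and BAL by construction selects the frontier node whose own attribute values are least probable in $\mathbb{S}$, so an unseen attribute value (probability zero in $\mathbb{S}$) is again strictly preferred.

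The only mild subtlety, and the step I would be most careful about, is the handling of the $0 \cdot \infty$ convention in the cross-entropy sum: I must insist that the divergence is triggered only by a pair $(p_{\Delta v}(i),p_{\mathbb{S}}(i))$ with $p_{\Delta v}(i)>0$ and $p_{\mathbb{S}}(i)=0$, and that terms with $p_{\Delta v}(i)=0$ contribute $0$ regardless of $p_{\mathbb{S}}(i)$. With that convention fixed (which matches the intent of \Cref{eq:surprise}), the remainder is a direct comparison of a finite quantity with $+\infty$, so no delicate estimates are required.
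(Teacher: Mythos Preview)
Your proposal is correct and follows essentially the same approach as the paper: both arguments hinge on the observation that $I_{\Delta v}$ in \Cref{eq:surprise} diverges to $+\infty$ whenever the candidate set $v\cup\Delta v$ contains an attribute value with $p_{\mathbb{S}}(i)=0$, whereas it stays finite otherwise, so the argmax must favor the unseen-attribute case. Your treatment is in fact more careful than the paper's (you explicitly handle the $0\cdot\infty$ convention and the $\Delta v=\emptyset$ reduction to BAL), but the underlying idea is identical.
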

\begin{proof}
    The surprise of a candidate set $v \cup \Delta v$ diverges, when the set contains a node with an attribute value not present in the current sample $\mathbb{S}$. The divergence can occur either because the node $v$ has an attribute value absent in $\mathbb{S}$, or that $\Delta v$ has at least one node with an unseen attribute value. In the former case, we add node $v$ to the sample immediately, while in the latter case, we add the node in $\Delta v$ with the unseen attribute value in the next step.
\end{proof}


%


Assume that we have an infinite $d$-regular random graph, where each node of the graph takes on a value from a categorical attribute $X$. Then, $e_{ij}$ is the probability that node with attribute value $i$ has a neighbor with attribute value $j$, with $\sum_j e_{ij}=1$. The set ${e_{ij}}$ defines the assortivity matrix $E$ for attribute $X$. The assortivity matrix is tied to the distribution $p$ of attribute values of $X$ over the entire graph as follows: $E^t p = p$. That is, $p$ is the right eigenvector of $E^t$ with eigenvalue 1. Now, we prove a Lemma on how the  assortivity $E$ creates a bias in information expansion sampling.

\begin{lemma}\label{lemma:bxs}
    The distribution of attribute values in the sample set $\mathbb{S}$, tends to the population distribution of the attribute, under Information Expansion Sampling over a $d$-regular, random, infinite graph.
\end{lemma}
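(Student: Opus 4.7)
The plan is to argue the lemma in two phases: an initial transient in which \Cref{lemma:coverage} fills $\mathbb{S}$ with every attribute value, followed by a steady-state phase in which the assortivity matrix $E$ together with the IXS selection rule drives $p_{\mathbb{S}}$ toward $p$. First, I would invoke \Cref{lemma:coverage} to observe that after at most $r$ additions each of the $r$ attribute values appears in $\mathbb{S}$, so $p_{\mathbb{S}}(i) > 0$ for every $i$ and the surprise in \Cref{eq:surprise} is finite. From this point onward the sampler is no longer forced by divergence and selects $v^\ast = \argmax_{v \in N(\mathbb{S})} -\sum_i p_{\Delta v}(i) \ln p_{\mathbb{S}}(i)$; since $-\ln p_{\mathbb{S}}(i)$ is large exactly when attribute value $i$ is under-represented in $\mathbb{S}$, the rule systematically biases the sampler toward rare attribute values.

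Next, I would use the infinite $d$-regular random graph assumption to describe the frontier composition. For any frontier node of attribute value $j$, the attribute distribution of its unexplored neighbors concentrates (in the infinite-graph limit) on the $j$-th row $(e_{j1}, \ldots, e_{jr})$ of $E$. If the current sample distribution is $q$, then the attribute distribution of nodes \emph{appearing} in the frontier is $E^{t} q$, and the distribution of candidate attributes seen through the sets $\Delta v$ is likewise governed by the rows of $E$ weighted by $q$. I would then describe the sampling process as the recursion
\begin{equation*}
q^{(t+1)} = q^{(t)} + \frac{1}{t+1}\bigl(\xi^{(t+1)} - q^{(t)}\bigr),
\end{equation*}
where $\xi^{(t+1)}$ is the indicator of the attribute value added at step $t+1$ under the IXS rule. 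The lemma reduces to showing that the conditional expectation of $\xi^{(t+1)}$ tends to $p$ as $q \to p$, i.e.\ that $p$ is the unique attracting fixed point of the recursion.

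To establish that $p$ is a fixed point, I would substitute $q = p$ into the surprise score. Because the frontier attribute distribution is then $E^{t} p = p$ by hypothesis, and the score becomes a cross-entropy of a row of $E$ against $p$, averaging over the frontier yields a selection probability of $p(i)$ for attribute $i$, so $\mathbb{E}[\xi^{(t+1)} \mid q=p] = p$. For the convergence to this fixed point, I would use the Kullback--Leibler divergence $\mathrm{KL}(p \Vert q)$ as a Lyapunov function: any coordinate with $q(i) < p(i)$ produces an inflated weight $-\ln q(i)$ that increases the conditional probability of adding attribute $i$ on the next step, giving a negative drift in the divergence; a standard stochastic-approximation argument then delivers almost-sure convergence $q^{(t)} \to p$.

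The main obstacle will be the combinatorial subtlety of the deterministic $\argmax$ selection on a frontier whose composition is random: IXS is not a simple urn scheme, and the score of a candidate depends both on $v$'s own attribute and on the joint attribute distribution in $\Delta v$. To keep the argument clean I would pass to the infinite-graph limit where the empirical row distributions of $E$ are attained almost surely inside each $\Delta v$, and replace the $\argmax$ by an equivalent randomized rule that picks each candidate in proportion to its exponentiated surprise; showing that this replacement preserves the fixed-point structure and does not alter the drift is where the real work lies.
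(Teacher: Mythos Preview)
Your proposal is substantially more ambitious than what the paper actually does, and the gap you flag at the end is precisely the one the paper avoids rather than solves.

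The paper's proof treats only the binary case $r=2$. With two attribute values the surprise of a candidate set becomes \emph{linear} in the fraction $x$ of the under-represented attribute: if the sample has proportions $(p,1-p)$ with $p<1/2$ and the candidate set has proportions $(x,1-x)$, then $I_{\Delta v} = -\ln(1-p) + x\ln\frac{1-p}{p}$. Linearity kills the combinatorial subtlety of the $\argmax$: IXS simply picks whichever frontier node has the largest $x$. The paper then argues (loosely) that the sample proportion $p$ drifts upward toward $\mathbb{E}(x)$ by concavity of the binary entropy, and computes $\mathbb{E}(x)$ for a random frontier node using the eigenvector identity $p_1 e_{11} + p_2 e_{21} = p_1$ to get $\mathbb{E}(x)=p_1$. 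That is the entire argument---no stochastic approximation, no Lyapunov function, no general $r$.

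Your stochastic-approximation route with $\mathrm{KL}(p\Vert q)$ as Lyapunov function is a genuinely different and more general strategy, but the step where you claim $\mathbb{E}[\xi^{(t+1)}\mid q=p]=p$ by ``averaging over the frontier'' does not follow for a deterministic $\argmax$ rule: at $q=p$ the surprise of a candidate set is just the cross-entropy of a row of $E$ against $p$, and the $\argmax$ will always pick the same row (the one maximizing that cross-entropy), not sample rows with frequency $p$. Your proposed fix---replace $\argmax$ by a softmax---changes the algorithm, and you would still owe an argument that the softmax dynamics and the $\argmax$ dynamics share the same limit. The paper never confronts this because linearity in the binary case makes the $\argmax$ transparent; if you want to extend to general $r$, that is where the real work is, and the paper gives you no help.
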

\begin{proof}
We shall prove the result for the case of an attribute that takes on two values \{1,2\}. Let us assume that the two attribute values $\{1,2\}$ occur in the set $\mathbb{S}$ with probabilities $p$ and $1-p$ with $p < 1/2$.  Further assume that in the sample candidate set $v \cup \Delta v$, attribute value $1$ occurs with probability $x$ and attribute value 2 occurs with probability $1-x$. Thus the attribute value with lower probability occurs in the candidate set $v \cup \Delta v$ with probability $x$. The surprise $I_{\Delta v}$ associated with the set $v \cup \Delta v$ is:
\begin{align}
I_{\Delta v} =&  - x \ln p - (1-x) \ln (1-p) \nonumber \\
=& -\ln (1-p) + x \ln \frac{1-p}{p}.
\end{align}
Since $I_{\Delta v}$ is linear in $x$ and since $p < 1/2$, IXS will pick the node $v^\ast \in N(\mathbb{S})$ with the largest value of $x$ to maximize $I_{\Delta v}$. In other words, IXS will pick the node $v^\ast$ with the largest fraction of least probable attribute.

If $p<x$, then the entropy of the updated sample set $\mathbb{S}\cup v$ increases as the entropy $H(p)$ of the sample $\mathbb{S}$ is concave in $p$. The limit of $p$ is simply $\mathbb{E}(x)$, the expectation of the fraction of the candidate set that is of attribute value 1. In other words, $p \leq \mathbb{E} (x)$.

The expected value of $x$, $\mathbb{E} (x)$, depends on the assortivity matrix $E$ as well as the probabilities $p_1$ and $p_2$, the probabilities that a random node is of type 1 or type 2. We compute $\mathbb{E} (x)$ as follows:
\begin{align}
\mathbb{E} (x) &= p_1 \frac{(d-1)e_{11} + 1}{d} + p_2\frac{(d-1)e_{21}}{d} \label{eq:expectation}\\
&= \frac{(d-1)p_1+ p_1}{d} \label{eq:eigenvector}\\
&= p_1.
\end{align}
\Cref{eq:expectation} shows that $\mathbb{E}(x)$ is the product of the prior probabilities of node $v$ having attribute 1 or 2, times the expected fraction number of neighbors being of type 1 given that $v$ either has attribute 1 or 2. The expected fractions depend on the assortivity values and $d$ the degree each node.~\Cref{eq:eigenvector} follows as $[p_1, p_2 ]$ is a left eigenvector of the assortivity matrix $E$.  The result that $p=\mathbb{E}(x)$ follows due to the concavity of $H(p)$.
\end{proof}

To summarize, IXS begins with a bias to rapidly cover the range of attribute values, and in the limit, ensures that the distribution of attribute values in $\mathbb{S}$ tends to the attribute value distribution in the population. In other words, the initial behavior is akin to stratified sampling and the limiting behavior of IXS is akin to uniform sampling of the node content.

Thus far, we've discussed surprise for categorical variables. We can extend the notion of surprise to continuous variables by simply discretizing the continuous variables and then computing the surprise using~\Cref{eq:surprise} with the discretized values. Next, we discuss a different approach to sampling continuous content.

\subsubsection{Extremal Point Sampling}
\label{subs:Extremal Point Sampling}

A node that is at a large distance, in terms of its features, from the all the current nodes would be surprising. We use this idea, which we term extremal point sampling (ExP), to identify surprising nodes for nodes with continuous features. In ExP, we rank the candidate nodes in terms of their  average distance to all the nodes in the sample set $\mathbb{S}$. The node with the highest rank is then added to the sample set $\mathbb{S}$. While we could use many different distance measures, we choose to use the standard Euclidean distance. Mahanalobis distance with its covariance correction would be the ideal Euclidean distance choice, but is not used due to the difficulty in developing a stable estimate for the covariance matrix with a small sample. Information expansion samplers that use both continuous and discrete variables are termed as Hybrid IXS  (H-IXS).

 \subsubsection{Surprise based MHRW}
 \label{subs:Surprise based MHRW}

 Could we make MHRW attribute aware? One possibility is to couple the surprise for each node $v \in N(i)$, where $N(i)$ is the neighborhood of $i$, the node where the MHRW sampler is at present. We could define the probability $\hat{p}_{i,v}$ of jumping from node $i$ to node $v$ as:
 \begin{equation}\label{eq:probabilistic MHRW}
 \hat{p}_{i,v} \propto p_{i,v} I_{\Delta v}
 \end{equation}

 where, $I_{\Delta v}$ is the surprise with respect to the sample set $\mathbb{S}$ and $p_{i,v}$ is the probability of transitioning to $v$ from $i$ in the original MHRW sampler.

 This approach has intuitive appeal since it appears to combine the best ideas from attribute-agnostic samplers with that of surprise based attribute-aware samplers; in addition unlike IXS or H-IXS, it is not a deterministic algorithm. The challenge is that~\Cref{eq:probabilistic MHRW} changes the stationary distribution of the sampler---we are no longer guaranteed uniform stationary distribution over the graph nodes. Such content-aware MHRW algorithms are also harder to analyze since the process is no longer first order Markov. Regardless, the idea that one could combine attribute-aware and attribute-agnostic samplers has obvious appeal, and we shall consider this idea in more detail in~\Cref{sub:Pareto Optimal}. We propose a simplistic combination sampler from IXS and MHRW that chooses non-deterministically with equal probability to sample from either of the strategies. We call this combination sampler as IXS and MHRW or I\&M.

 \label{sub:Sampling with Side Information}
 All the algorithms discussed thus far assume no prior knowledge of the structural  characteristics of the network or anything about the distribution of the content. However, often, we may have a rough idea of either the properties of the network, say the skewness of the degree distribution or of the attribute values (e.g. most of the Twitter users are from the U.S.).  How should we incorporate this side information into the sampling process?

 We have explored this idea with respect to sampling content properties of the network. For example, assume that we have access via an oracle, to the underlying attribute distribution $\bm{p}$ over the entire network. Then, one could simply use the earlier surprise based criteria to add additional nodes to the sample $\mathbb{S}$, except that instead of using $P_{\mathbb{S}}(i)$ in~\Cref{eq:surprise}, we use $p_i$. Notice that $\bm{p}$ is a constant while $P_{\mathbb{S}}$ is variable. This change will ensure that samples collected in $\mathbb{S}$ will have an attribute distribution that matches $\bm{p}$.

 When the prior $\bm{p}$ is unavailable, one could proceed as follows. We first MHRW till the sample statistic (say the distribution mean) converges via Gilman Ruben or Gweeke statistic~\cite{Gjoka2010} and then estimate $\hat{\bm{p}}$, the sample attribute value distribution. Then, we proceed as earlier and use $\hat{\bm{p}}$ in the surprise calculation.

 Another approach is to incorporate knowledge of the underlying content clusters, or content classes, which may be known (e.g. if all residents in a U.S. state form a class, then there are 50 classes). Assume then, that we know the number of content clusters $k$. We can combine the IXS (for categorical content) and the ExP samplers (for the continuous feature vectors) as follows. From the sampled set $\mathbb{S}$, we construct $k$ content clusters with different centers using the continuous content. Then, we assign each node in $\mathbb{S}$ and $\Delta v$ to the nearest cluster center. Now, we can compute surprise as earlier based on the cluster \texttt{id} distribution of $\Delta v$ in conjunction with the surprise of $\Delta v$ with respect to the distribution of categorical attributes in $\mathbb{S}$.

 On the other hand if the content distribution of the original network is known from practice or approximation \cite{hubler2008metropolis}, we leverage the variable neighbourhood search (VNS) approach to select nodes sequentially that preserve the known prior content distribution in the hope of gaining a better representative sample.  In other words we select the node $v\ast$ in $N(S)$ which minimizes KS statistic of the sample and original distribution.

 \begin{align}
 v\ast = argmin_{v \in N(S)} D(A(V_S), A(V))
 \end{align}

 \subsection{Pareto-Optimal Sampling}
 \label{sub:Pareto Optimal}

 It would be ideal to develop a sampler that could preserve the properties of the network content as well as structural properties of the network. We do this via a Pareto-optimal sampler that combines MHRW based sampling with surprise based sampling. Assume that we have a sample set $\mathbb{S}$. Then,  $\forall v  \in N(\mathbb{S})$, we can compute two numbers: the probability of reaching $v$ from $\mathbb{S}$ via MHRW and $I_{\Delta v}$ the surprise due to node $v$. Thus we can compute the Pareto-optimal frontier using all $v  \in N(\mathbb{S})$ and choose the node from the frontier that best suits our bias (equal weight to structural properties and to content). We provide results of a pareto-combination of IXS ($I_{\Delta v}$) and XS ($|\Delta v|$) called as pIX (pareto-IXS-XS) and another pareto combination of IXS ($I_{\Delta v}$) and MHRW (under independece assumption of new transition probabilities ($P^{\ast}$) called as pIM (pareto-IXS-MHRW). 
 \begin{align}
 P_{u,v}^{\ast} = \frac{ \sum_{u \in S} min(1/d_v , 1/d_u) }{ \sum_{w \in N(S)}  \sum_{u \in S} min(1/d_w , 1/d_u) }
 \end{align}

In this section we discussed attribute-agnostic and attribute-aware sampling schemes. The main idea behind attribute-aware sampling algorithms is to preserve node content properties including content distribution in the sample. We introduced the idea of surprise, grounded in Information Theory, as a metric to develop sampling schemes. Next, we evaluate these sampling schemes on characterizing the node content.


\section{Data Characterization}
\label{sec:DataCharacterization}

In this section, we discuss how samplers preserve the statistical characteristics of attributed graphs: properties related to the network structure; distributions of attributes and joint content-network relationships. We conclude this section by presenting experimental results comparing different samplers.

\subsection{Properties}
\label{sub:Network Characteristics}

We study three properties of an attributed network: network structure, content structure and network-content relationship.

\textit{Network Properties: }We use three properties widely used to characterize network datasets~\citep{maiya2011benefits,leskovec2006sampling}---\emph{degree distribution, clustering coefficient, diameter}. We will evaluate samplers based on their ability to preserve these three network properties in the sampled subgraph.
that
The degree distribution is simply the probability distribution $P(k)$ of finding a node with degree $k$ in the network. The clustering coefficient distribution is the distribution of clustering coefficients over node degree. The local clustering coefficient of a node $v$ is defined as: $C_v= 2e_v/d_v \times (d_v-1)$, where $e_v$ is the number of edges amongst the neighbors of node $v$, and $d_v$ is the degree of node $v$.


\textit{Content Characteristics: }We would like samplers to preserve essential aspects of the node content, including attribute value distribution and attribute coverage. By the phrase ``node content,'' we refer to the attributes such as ``gender=female,'' ``ethnicity=asian''; we are using the word ``content'' to refer to all nodal attributes that are \textit{not} derived from structural properties of the graph, such as degree and clustering coefficient. We use the familiar Kolmogorov-Smirov (KS) statistic to compute the distance between the sample attribute value distribution and the underlying ground-truth attribute value distribution. \textit{Content coverage} is another key content characteristic. We define content coverage as the ratio of the number of unique attribute values in the sample to the cardinality of the corresponding attribute in the underlying content. We use logarithmic binning for continuous attributes. Besides distribution and coverage, content attributes exhibit structure in the form of clusters; we discuss this in \Cref{sec:Discovering data clusters,sec:Classification}.


\textit{Joint Network-Content Relationships} : Network structure and node content are often correlated; this is termed as homophily. For example, the correlation can arise due to homophily \citep{McPherson2001} when friendships form when like minded individuals seek out each other. Thus it is important to preserve the correlation between network and content. We will use assortativity \citep{newman2003mixing} a widely used metric to measure this correlation. Besides assortativity, we define and discuss more specific measures of homophily at nodal levels such as Ego-relation and Star-relation \citep{Kumar2016}.

However, assortativity being a global measure fails to capture attribute mixing at micro levels. We therefore propose two new local measures-- Star-relation ($S_v$) and Ego-relation ($E_v$) for every attribute. Star-relation is the defined as the agreement (or correlation) of content attribute values \textit{between} a node $v$ and her friends. On the other hand, Ego-relation ($E_v$) is defined as the agreement of content \textit{among} the node $v$ and her friends. Thus,

\begin{align*}
	S_v =& \frac{|\{u \in N_v \mid A(u) = A(v)\}|}{d_v}, \\
	E_v =& \frac{2|\{(u, w) \in  n_v \cup v \mid A(u) = A(w)\}|}{d_v(d_v-1)}.
\end{align*}

The equations says that Star-relation of a node $v$ is the fraction of $v$'s neighbor who have same attribute value as $v$. Notice that the degree $d_v$ of the node $v$ is simply $|N_v|$. Similarly, the Ego-relation of a node $v$ is the fraction of node pairs in neighborhood of $v$ including $v$ that have the same attribute value. Thus $S_v$ captures the degree to which a node $v$ agrees with her friends, while $E_v$ captures the degree to which a group agrees. In this paper, we study assortativity, Star-relation and Ego-relation for every attribute independently to understand the different network-content relationship.

\subsection{Experimental setup}
%


\textit{Dataset description: } We now present the dataset used along-with our guiding principle for choosing these networks and their corresponding content attributes. While many attributed networks are available, we decided to work with only those networks that are not sparse---in other words, most nodes have values for attributes of interest. We set the sparsity threshold to 75\%. We removed nodes with missing values from our datasets in the pre-processing step. Hence, we chose to work with network datasets from Facebook, US Patent, Enron, Wikipedia and Pokec. Furthermore, we considered attributed-network datasets such as Google+, Twitter and Microsoft Academic Search dataset \citep{leskovec2009community,sinha2015overview} which were not chosen due to concerns over sparsity. We recognize that future sampler design must address the issue of sparsity and noise. We picked a set of attributes to work with, such that over these attributes we had a wide variation in range, cardinality, purity, skew and assortativity. We choose attributes from that are not dependent on network structure such as ``gender'' of a person and ``category'' of a patent which are independent of network. Additionally for attribute choice, we choose contrasting attributes to cover a wide range of content characteristics such as range/cardinality, purity, skew and assortativity.

\textit{Evaluation: } We evaluate the performance of a sampler as the mean performance for a specific characteristic over a range of sample sizes. As an example, for a fixed sample size, we use the K-S statistic to measure the difference in attribute value distribution between the original graph and the sample. We compute the expected value, by running the sampling operation using a new seed node hundred times. Then we compute the mean of these expected values over the different target sample sizes to determine the average performance of the sampler.

Thus, let $D$ be the measure used, with  $\alpha(l)$ being the number of nodes in the sampled graph $G'$, such that the sampled graph size is $l$\% of the original graph $G$. Thus at any target sample size $\alpha(l)$, the expected performance is $\mathbb{E}[D(G, G'; \alpha(l))]$, where the expectation operator $\mathbb{E}$ is over different samples $G'$ each of size $\alpha(l)$. Thus the mean performance with respect to a characteristic is:
$$\bar{D}(G, G') = \frac{1}{Q} \sum_{l=1}^{Q} \mathbb{E}(D(G, G'; \alpha(l))) $$
The mean performance values $\bar{D}$ can be interpreted as area under the curve of performance $D$ against sample size. We use $Q = 10$.

\subsection{Experimental results}

\begin{table}[b]
\center
\caption{ The content distribution preservation goal is to have the least possible KS statistic (lower is better) averaged over all attributes between original and sampled content distribution. Consistent with theory, UNI is the best sampler to preserve attribute distributions.}
\label{table:content}
\resizebox{\columnwidth}{!}{%
\begin{tabular}{@{}cccccc@{}}
 & \multicolumn{1}{c}{\textit{Facebook}} & \multicolumn{1}{c}{\textit{Patent}}& \multicolumn{1}{c}{\textit{Enron}}& \multicolumn{1}{c}{\textit{Pokec}} & \multicolumn{1}{c}{\textit{Wikipedia}} \\ \toprule
BFS&0.109&0.061&0.623&0.155&0.541\\
RW&0.103&0.066&0.687&0.109&0.472\\
MHRW&0.129&0.068&0.627&0.048&0.514\\
FF&0.109&0.129&0.685&0.102&0.475\\
XS&0.047&0.351&0.628&0.246&0.326\\ \midrule
UNI&\textbf{0.038}&\textbf{0.016}&\textbf{0.412}&\textbf{0.004}&0.542\\ \midrule
ExP&N/A&0.148&0.438&N/A&N/A\\
BAL&0.290&0.219&N/A&0.308&\textbf{0.135}\\
IXS&0.181&0.138&N/A&0.283&0.269\\ \bottomrule

\end{tabular}
}
\end{table}

In this section, we present our sampling results for four different characteristics of attributed network: (1) content distribution, (2) content coverage, (3) network structure including degree, clustering coefficient and path length and (4) content-network dependence including assortativity. Results for the four characteristics are presented in the \Cref{table:content,table:coverage,table:network,table:joint-net-data}.




We now describe the general characteristics common to all results followed by detailed interpretation of each result table. Notice that the results in \Cref{table:content,table:coverage,table:network,table:joint-net-data} have segmented rows and some missing values (N/A). We organize the samplers into three groups (visually segmented in the tables): attribute-agnostic link-trace samplers such as BFS, RW, MHRW, FF and XS; the baseline sampler, UNI and the proposed surprise-driven attribute-aware samplers like ExP, BAL and IXS. Some entries are listed as Non applicable (N/A). This can happen if we use a sampler that relies on continuous attributes (ExP), but the network has only discrete attributes (Facebook). Similarly, in \Cref{table:content}, IXS and BAL cannot be performed over the continuous attributes in Enron without the knowledge of the range of continuous attribute.


\begin{table}[b]
\center
\caption{ The content coverage goal is to have maximum possible mean coverage of content (all attributes). IXS is the best sampler due to its biasness towards new attribute values.}
\label{table:coverage}
\resizebox{\columnwidth}{!}{%
\begin{tabular}{@{}cccccc@{}}
 & \multicolumn{1}{c}{\textit{Facebook}} & \multicolumn{1}{c}{\textit{Patent}}& \multicolumn{1}{c}{\textit{Enron}}& \multicolumn{1}{c}{\textit{Pokec}} & \multicolumn{1}{c}{\textit{Wikipedia}} \\ \toprule
BFS&0.752&0.892&0.167&0.978&0.452\\ 
RW&0.752&0.899&\textbf{0.192}&0.986&0.482\\ 
MHRW&0.648&0.896&0.161&0.989&0.443\\ 
FF&0.754&0.910&0.188&0.984&0.513\\ 
XS&0.825&0.868&0.154&0.962&0.660\\ \midrule
UNI&0.826&0.882&0.082&0.995&0.473\\ \midrule
ExP&N/A&0.924&0.053&N/A&N/A\\ 
BAL&0.784&0.942&N/A&\textbf{1.000}&\textbf{0.843}\\ 
NXS&\textbf{0.839}&\textbf{0.960}&N/A&\textbf{1.000}&0.722\\  \bottomrule

\end{tabular}
}
\end{table}

In the \textit{first} data characterization task, we compute the KS statistic between attribute distributions of the nodes in sampled graph and the nodes in original graph; we present the results averaged over all the attributes in the dataset. \Cref{table:content} depicts the result for content distribution. Observe that UNI, where nodes are selected uniformly at random, is the best sampler for preserving content distribution. This is because UNI creates an unbiased estimate of content distribution making it ideal for all attributes. In contrast, attribute-agnostic samplers like BFS are influenced by homophily in the network, and the proposed attribute-aware samplers are intrinsically biased towards \textit{new} attribute values. Surprisingly, even though MHRW and re-weighted RW have a uniform stationary distribution, it shows poor finite sample performance~\citep{Gjoka2010}.

For the \textit{second} data characterization task, we compute coverage of content (attribute) in sampled graph averaged over all attributes. \Cref{table:coverage} shows the efficiency of samplers at exploring different attribute values. Attribute-aware samplers like IXS and BAL, being biased towards rare attribute values (Lemma \ref{lemma:coverage}), perform much better for real-world attributes. This is due to the fact that most of the real-world attributes are highly skewed. The surprisingly improved performance of BAL over IXS in Wikipedia shall be discussed in detail in \Cref{sub:Experimental results clustering}. Due to homophily of content in real-world networks, attribute-agnostic samplers like MHRW and BFS perform poorly at content coverage. Note, ExP that covers extreme continuous values fails to cover mid-logarithmic bins in Enron.

\begin{table}[h]
\center
\caption{ The network-structure preservation goal is to obtain the least possible KS statistic of degree, clustering coefficient and path length distributions. Attribute-agnostic samplers such as XS, FF and RW are tuned to preserve network characteristics, perform better than attribute-aware samplers. }\label{table:network}
\resizebox{\columnwidth}{!}{%
\begin{tabular}{@{}c ccc ccc c@{}}
 & \multicolumn{3}{c}{\textit{Patent}} & \multicolumn{3}{c}{\textit{Enron}} \\ \toprule
Samplers& Degree& CC& Path& Degree& CC& Path\\ \toprule
BFS&0.074&0.075&0.688&0.342&0.295&0.679\\
RW&0.080&0.080&0.442&0.391&0.324&0.423\\
MHRW&0.071&0.072&0.492&0.315&0.308&\textbf{0.178}\\
FF&0.185&0.188&0.916&0.388&0.335&0.387\\
XS&0.114&\textbf{0.040}&0.554&\textbf{0.151}&\textbf{0.122}&0.251\\ \midrule
UNI&0.863&0.673&0.988&0.624&0.518&0.894\\ \midrule
ExP&\textbf{0.051}&0.044&\textbf{0.217}&0.367&0.303&0.610\\
BAL&0.088&0.088&0.534&N/A&N/A&N/A\\
IXS&0.100&0.100&0.479&N/A&N/A&N/A\\ \bottomrule
\end{tabular}%
}\end{table}

For the \textit{third} data characterization task, we compute the similarity in distribution of network features : degree, clustering and path length. We report the mean KS statistic between the sample and underlying network feature distributions in \Cref{table:network}. We observe similar behavior among all datasets, but due to space constraints, we show results from only Patent and Enron datasets. The network characteristic for the remaining datasets--Facebook, Pokec and Wikipedia--is shown in \Cref{table:remaining}.~\citet{leskovec2008planetary} suggest a fast, approximate path-length distribution computation by randomly selecting 1000 nodes to construct shortest path from the selected nodes to all other nodes in the network. As expected, attribute-agnostic samplers such as MHRW and XS outperform attribute-aware samplers. This is  because FF, XS, RW and MHRW are link-trace samplers designed to preserve network structure characteristics. UNI (sampling nodes uniformly at random) is the worst sampler due to low edge density in real-world networks; low edge density causes UNI sampled graphs to have a large number of disconnected components.

\begin{table}[!h]
\center
\caption{ The content-network relationship (assortativity) preservation goal is to have least possible difference in assortativity between the sampled and original attributed networks over all attributes. By expectation, random edge sampling would be the ideal. Among the samplers discussed, there is no single dominant sampler strategy for this task.}\label{table:joint-net-data}
\resizebox{\columnwidth}{!}{%
\begin{tabular}{@{}cccccc@{}}
 & \multicolumn{1}{c}{\textit{Facebook}} & \multicolumn{1}{c}{\textit{Patent}}& \multicolumn{1}{c}{\textit{Enron}}& \multicolumn{1}{c}{\textit{Pokec}} & \multicolumn{1}{c}{\textit{Wikipedia}} \\ \toprule
BFS&0.088&0.088&0.024&0.076&0.597\\
RW&0.091&0.091&0.013&0.025&0.597\\
MHRW&0.094&0.094&0.023&0.025&0.597\\
FF&0.081&0.081&\textbf{0.012}&0.020&0.597\\
XS&0.122&0.122&0.013&0.213&0.597\\ \midrule
UNI&\textbf{0.074}&0.074&0.044&\textbf{0.007}&0.689\\ \midrule
ExP&N/A&0.027&0.033&N/A&N/A\\
BAL&0.144&0.005&N/A&0.076&0.597\\
IXS&0.141&\textbf{0.003}&N/A&0.030&0.598\\ \bottomrule

\end{tabular}
}
\end{table}

In the \textit{fourth} data characterization task, we compute the absolute difference in assortivity between the sampled attributed graph and the original attributed graph. The absolute difference between assortativity values are shown in \Cref{table:joint-net-data}. Observe that there is no sampler that distinctly outperforms others at preserving assortativity. It remains an open problem to design an efficient sampler that can preserve network-content relationship over a wide range of networks.

In sum, we saw in this section that UNI performs best for content distribution sampling, attribute-aware samplers are efficient at exploring new content values, attribute-agnostic samplers preserve the network structure and there is no statistically-dominant strategy for preserving content-network relationship.




\section{Discovering Content Clusters}
\label{sec:Discovering data clusters}

In this section, we present experimental results that show the effects of different types of link-trace samplers on content clustering. First, we discuss the experimental setup including validation metrics. Finally in \Cref{sub:Experimental results clustering} we present our experimental results.

Clustering is a statistical technique to organize objects into groups. Consider a collection of $n$ objects \{$x_i$\}. The goal of clustering is to partition the collection into $k$ groups such that objects within each group are more similar to each other than with objects in other groups. In particular, we assume that the content attributes of the nodes in the network can be partitioned into $k$ groups. Thus the goal is to develop samplers that preserves the $k$ groups of the original data in the sample. In this paper, we shall consider the case of non-overlapping clusters. Furthermore, we are specifically interested in the use of the collection \{$x_i$\} referred as the content attributes of the nodes.

%

\subsection{Experimental setup}
\label{sub:Experimental setup1}

\begin{figure*}[!htb]
  \centering
\includegraphics[width=\textwidth, trim={9cm 0.3cm 8.65cm 1cm},clip]{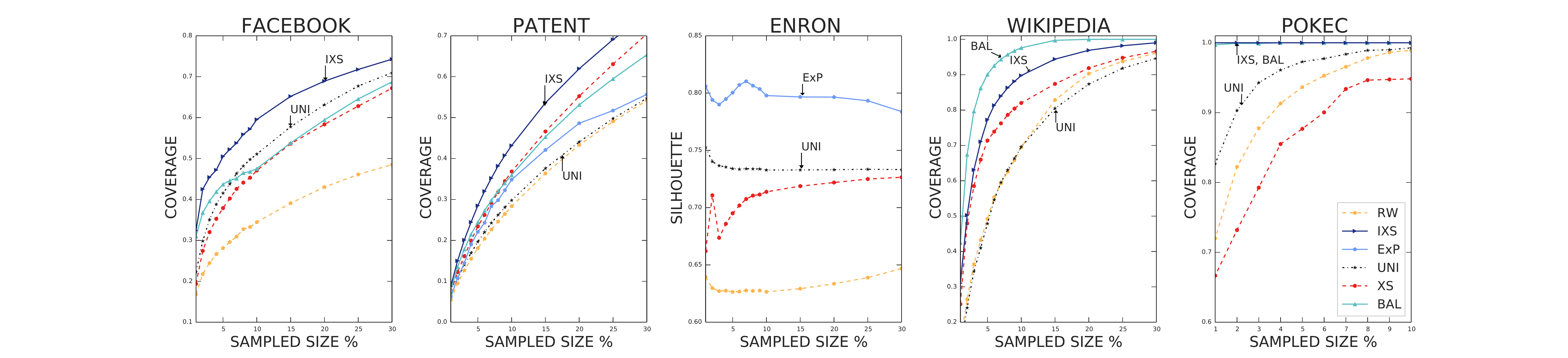}
\caption{Clustering performance on five real-world datasets---Facebook, Patent, Enron, Pokec and Wikipedia---shows attribute-aware samplers outperforming attribute-agnostic samplers as well as uniform sampling at nearly all sampling points. Due to the skewed cluster-size distribution, UNI is unable to discover small sized clusters.}
\label{fig:real_clusters}
\end{figure*}

In this section we discuss the datasets used in the experiments as well as the evaluation methodologies specific to clustering.

\textit{Dataset description: }We test on real-world and synthetic datasets. We study the impact of sampling on clustering performance on all five real-world attributed graphs: Facebook, Patent, Enron, Pokec and Wikipedia. Synthetic datasets are valuable because not only can we control the ground-truth clusters, but we can also vary the characteristics such as purity, skew and assortativity and observe their effects on the clustering results. We employ synthetic network generation model discussed in \Cref{sub:Datasets} to generate attributed networks from several network generators including LFR, Watts-Strogatz and Barabasi models with varying content characteristics of purity, skew and assortativity. Additionally when synthesizing the data, we assume that each object belongs to only one cluster. Furthermore, the model generates three attributes---two continuous and one discrete attribute---for each node or object. The attribute values are influenced by parameters of purity, skew and assortativity.

\textit{Evaluation: } Clusters from real-world and synthetic clustering are evaluated differently. In absence of ground-truth clusters, we use two different metrics for real-world datasets: cluster coverage and silhouette coefficient. Cluster coverage is defined over discrete attributes in Facebook, Patent and Pokec as the ratio of the number of unique attribute combinations (content cluster) in the sample to the maximum number of attribute combinations. Due to the very high attribute size of Wikipedia, we define the coverage of attributes as the ratio of number of attributes observed in sample to the total number of attributes in the underlying network. Silhouette coefficient captures the compactness of clusters and is used here to evaluate clustering quality in the Enron dataset, which comprises continuous attributes. For synthetic networks, since we know the ground truth label assignments, we use Normalized Mutual Information (NMI) to evaluate the clusters.

\subsection{Experimental results}
\label{sub:Experimental results clustering}

Now, we present experimental results for real-world and synthetic networks. We begin with a discussion of results for real-world networks.

%
%



We present the clustering performance for five real-world datasets: Facebook, Patent, Enron, Wikipedia and Pokec. We observe very similar behavior of attribute-agnostic samplers, FF, MHRW, BFS and RW; we only plot RW as the representative of random walk variants. \Cref{fig:real_clusters} shows attribute-aware samplers like IXS significantly outperform attribute-agnostic samplers such as RW and XS. The results show expected behavior. Being biased towards unseen attributes (Lemma~\ref{lemma:coverage}), IXS discovers content clusters (unique attribute combinations) much faster than RW samplers. RIXS and ExP does remarkably better than UNI across all attributed networks. At low sampling percentages (1, 2, 3\%), IXS's performance is $>$144\% better than UNI's performance in Pokec and Wikipedia networks, while it is $>$40\% better in Facebook and Patent. Likewise, ExP is better than UNI by a margin of 10\% in Enron network.

Wikipedia is a highly unusual dataset. It has a very large number of attributes (7,969) compared to just 1,564 philosopher nodes. On average, there are five unique attributes per node. Furthermore, the attributes in Wikipedia are highly asymmetric binary variables, i.e. probability of a philosopher page having some attribute like ``feminist = True'' or ``China = True'' is very small (\~1\%). As a consequence, the surprise created by almost every philosopher $v$ in the frontier set diverges. The divergence is amplified when the $\Delta v$ is large, thereby subduing the surprise (divergence) from the node $v$. As a result BAL, the information sampler that has smallest $\Delta v = \phi$ has the best performance (fourth subfigure of~\Cref{fig:real_clusters}).


The information surprise defined by IXS can easily be extended to handle asymmetric binary attributes. Thus, the information surprise can be conveniently used to solve bag-of-words model. The surprise can be simplified when the words in a document be treated as Bernoulli random variables. This reduces the time complexity to compute the surprise of a document to the order of document size. This reduction in complexity enables the IXS derived samplers remain fast and scalable while exploring new information.



We now present our analysis from synthetic networks. An extensive set of experiments on synthetic attributed networks reveal the effect of network structure, content purity (separability of clusters), skewness of cluster sizes and content-network assortativity on sampling performance. First, we observe that link-trace samplers show surprisingly little variation over different network structures including Barabasi, Watts-Strogatz and LFR \cite{lancichinetti2009benchmarks} network structures.  Second, purity or separability of content clusters enable samplers discover clusters much faster. Third, skewness of the cluster sizes makes it difficult for smaller sized clusters to be discovered. Fourth, assortativity controls the access of link-trace samplers to different clusters; therefore lower assortativity means better cluster access and discovery. Finally, we show via Lemma~\ref{lemma:coverage} that \textit{unseen attribute} biasness and \textit{stratified content} sampling helps attribute-aware samplers such as IXS circumvent the above effects.


Now, we begin with a detailed discussion of results for synthetic networks. 

\textit{Network effect: } Link trace samplers show surprisingly little variation over very different network structures. In this experiment, we tested different networks structures : Barabasi, Watts-Strogatz ($p = 0.1$, high clustering coefficient) and LFR ($\mu = 0.1$, high clustering with power law degree distribution). We notice that the samplers' relative performance remains unchanged over the three networks. Furthermore, we observe that IXS based samplers does consistently better than other samplers. IXS stratifies the content by sampling nodes that have very different content characteristics, and this makes the content easier to cluster. It is interesting to note that attribute-aware samplers outperform UNI since independent sampling of network nodes is considered as the ideal case for content. UNI over-samples nodes in clusters of large sizes, thus making them poorer at cluster preservation. Attribute-agnostic samplers are affected not only by the cluster size but also by the assortativity in network making them even poorer at cluster preservation.

%
%

\begin{table*}[!htb]\footnotesize\mdseries
\centering
\caption{The clustering preservation goal is to have maximum possible normalized mutual information. The table shows different samplers' performances at preserving content-clusters in a LFR ($\mu =0.1$) network. It is evident from the table that the parameters of skew ($s_l=0,\ s_m=0.22,\ s_h=0.52$), purity ($p_l=0.2,\ p_h=10$) and assortativity ($a_l=0,\ a_h=1$) have a significant impact on the classification performance at different values ($l$: low, $m$: medium, $h$: high). The last row depicts the performance improvement of IXS over UNI. This is prominent at high skew, high purity and low assortativity.}
\label{table:cluster}
\resizebox{\textwidth}{!}{%
\begin{tabular}{@{}c cc r cc r  cc r cc r cc r cc@{}}
 \toprule
{} & \multicolumn{5}{c}{$s_l$} & \phantom{a} & \multicolumn{5}{c}{$s_m$} & \phantom{a}& \multicolumn{5}{c}{$s_h$}\\
\cmidrule{2-6} \cmidrule{8-12} \cmidrule{14-18}
{} & \multicolumn{2}{c}{$a_l$} & \phantom{a} & \multicolumn{2}{c}{$a_h$} & \phantom{a} & \multicolumn{2}{c}{$a_l$} & \phantom{a} & \multicolumn{2}{c}{$a_h$} & \phantom{a} & \multicolumn{2}{c}{$a_l$} &\phantom{a} & \multicolumn{2}{c}{$a_h$}\\
\cmidrule{2-3} \cmidrule{5-6} \cmidrule{8-9} \cmidrule{11-12} \cmidrule{14-15} \cmidrule{17-18}
samplers & $p_l$ & $p_h$ && $p_l$ & $p_h$ && $p_l$ & $p_h$ && $p_l$ & $p_h$ && $p_l$ & $p_h$ && $p_l$ & $p_h$\\ \midrule
BFS&0.42&0.996&&0.414&0.983&&0.384&0.958&&0.405&0.933&&0.31&0.882&&0.318&0.84\\ 
RW&0.421&0.995&&0.42&0.992&&0.391&0.961&&0.38&0.945&&0.328&0.931&&0.305&0.848\\ 
MHRW&0.426&0.995&&0.419&0.993&&0.378&0.961&&0.395&0.946&&0.328&0.921&&0.295&0.846\\ 
FF&0.418&0.995&&0.421&0.99&&0.378&0.959&&0.391&0.942&&0.315&0.91&&0.306&0.827\\ 
XS&0.419&0.995&&0.422&0.992&&0.382&0.965&&0.385&0.936&&0.313&0.953&&0.309&0.839\\ \midrule
UNI&0.417&0.996&&0.416&0.994&&0.383&0.961&&0.393&0.956&&0.318&0.862&&0.304&0.909\\ \midrule
ExP&\textbf{0.607}&0.961&&\textbf{0.599}&0.964&&\textbf{0.581}&0.946&&\textbf{0.573}&0.948&&\textbf{0.518}&0.95&&\textbf{0.522}&0.95\\ 
BAL&0.423&\textbf{0.999}&&0.425&0.995&&0.393&\textbf{0.992}&&0.389&\textbf{0.991}&&0.335&\textbf{0.979}&&0.337&\textbf{0.987}\\ 
IXS&0.429&0.998&&0.416&\textbf{0.996}&&0.393&0.987&&0.392&\textbf{0.991}&&0.318&0.972&&0.32&0.981\\ 
H-IXS&0.501&0.989&&0.494&0.991&&0.494&0.98&&0.483&0.983&&0.457&0.977&&0.464&0.979\\ \midrule
$\frac{IXS-UNI}{UNI}$&4.28\%&0.20\%&&0.00\%&0.20\%&&2.61\%&2.71\%&&-0.25\%&3.66\%&&0.00\%&12.76\%&&5.26\%&7.92\%\\ \bottomrule
 \end{tabular}
 }
\end{table*}

\textit{Content dependence: } Attribute-aware samplers perform significantly better than attribute-agnostic samplers. The results as shown in \Cref{table:cluster} reveal that for every case, there is a attribute-aware samplers that outperforms other samplers in a statistically significant manner ($p < 0.05$). In the best scenario of medium skew, high assortativity and low purity ($s_m, a_h, p_l$), attribute-aware samplers achieves an improvement over existing baselines by as much as 45\%.

Content characteristics---purity, skew and assortativity---affect the sampling performance to varying extent. \Cref{table:cluster} suggests that each of the three characteristics has different impact on clustering performance. At low purity, high skew and low assortativity, the samplers are least likely to discover clusters accurately, while at high purity, low skew and high assortativity, the samplers are very efficient at preserving clusters.

\textit{Purity: }All samplers perform better with increased purity. As purity increases, the clusters get well separated thereby improving the performance. It is therefore not surprising that when purity is high, skew and assortativity becomes less relevant. At the same value of purity, samplers other than content aware samplers, show a marginal decrease in performance as skew and assortativity increase.

\textit{Skew: }Observing the three blocks of columns in \Cref{table:cluster} from left to right, we clearly see that increased skew makes it harder for the samplers to preserve the clustering. It is unclear as to why the performance slightly improves at mid-skew levels which was expected to be slightly worse off. This may be due to an interaction effect among the three parameters and is an effect not fully understood. In general, as skew in the data cluster increases, it becomes increasingly difficult to identify smaller sized clusters, thereby degrading the clustering performance.

\textit{Assortativity: }High values of assortativity cause performance degradation at mid-skew and high-skew levels in \Cref{table:cluster} by acting as a bottleneck for link trace samplers. Observe that at mid-skew and high-skew levels and high-purity levels, greater randomness ($a_l \rightarrow 0$) leads to greater correct ($p_h$) cluster information and therefore better performance. Conversely at low-purity levels, NMI performance degrades owing to the fact that now greater information means more noisy information since clusters overlap. This happens due to the fact that high purity implies clusters are well separated and therefore learning about new clusters improves the overall clustering performance. In sum, assortativity controls access of link-trace samplers to different content clusters.

In this section, we observe that higher-level content structure, i.e. content clusters. Importantly, we find that attribute-aware sampling is better at preserving cluster information than uniform sampling. This is important since uniform samplers are the gold standard for sampling attributes. In the next section, we explore the impact of sampling on another data mining task---classification.


\section{Classification}
\label{sec:Classification}
\begin{table*}[!htb]\footnotesize\mdseries
\centering
\caption{The classification preservation goal is to have maximum possible weighted $F_1$ score. The table shows different samplers' performances at predicting hidden cluster-\texttt{id} on a LFR ($\mu =0.1$) network. It is evident from the table that the parameters of skew ($s_l=0,\ s_m=0.22,\ s_h=0.52$), purity ($p_l=0.2,\ p_h=10$) and assortativity ($a_l=0,\ a_h=1$) have a significant impact on the classification performance at different values ($l$: low, $m$: medium, $h$: high). The last row depicts the performance improvement of IXS over UNI. This is prominent at high skew, high purity and low assortativity.}
\label{table:classification}
\resizebox{\textwidth}{!}{%
\begin{tabular}{@{}c cc r cc r  cc r cc r cc r cc@{}}
 \toprule
{} & \multicolumn{5}{c}{$s_l$} & \phantom{a} & \multicolumn{5}{c}{$s_m$} & \phantom{a}& \multicolumn{5}{c}{$s_h$}\\
\cmidrule{2-6} \cmidrule{8-12} \cmidrule{14-18}
{} & \multicolumn{2}{c}{$a_l$} & \phantom{a} & \multicolumn{2}{c}{$a_h$} & \phantom{a} & \multicolumn{2}{c}{$a_l$} & \phantom{a} & \multicolumn{2}{c}{$a_h$} & \phantom{a} & \multicolumn{2}{c}{$a_l$} &\phantom{a} & \multicolumn{2}{c}{$a_h$}\\
\cmidrule{2-3} \cmidrule{5-6} \cmidrule{8-9} \cmidrule{11-12} \cmidrule{14-15} \cmidrule{17-18}
samplers & $p_l$ & $p_h$ && $p_l$ & $p_h$ && $p_l$ & $p_h$ && $p_l$ & $p_h$ && $p_l$ & $p_h$ && $p_l$ & $p_h$\\ \midrule
BFS&0.302&0.917&&0.306&0.870&&0.269&0.824&&0.261&0.783&&0.202&0.750&&0.196&0.619\\
RW&0.300&0.907&&0.311&0.891&&0.274&0.817&&0.268&0.781&&0.203&0.746&&0.198&0.628\\
MHRW&0.294&0.906&&0.312&0.876&&0.269&0.803&&0.264&0.772&&0.204&0.728&&0.192&0.652\\
FF&0.295&0.915&&0.304&0.907&&0.269&0.820&&0.268&0.779&&0.198&0.747&&0.198&0.615\\
XS&0.298&0.920&&0.301&0.883&&0.265&0.822&&0.268&0.745&&0.198&0.807&&0.201&0.577\\ \midrule
UNI&0.297&0.921&&0.301&0.905&&0.268&0.788&&0.263&0.768&&0.203&0.684&&0.200&0.700\\ \midrule
ExP&0.314&0.784&&0.317&0.731&&\textbf{0.332}&0.820&&\textbf{0.312}&0.774&&\textbf{0.310}&0.805&&\textbf{0.304}&\textbf{0.860}\\
BAL&\textbf{0.377}&\textbf{0.968}&&\textbf{0.393}&0.962&&0.317&\textbf{0.904}&&0.306&\textbf{0.882}&&0.224&0.892&&0.216&0.804\\
IXS&0.376&\textbf{0.968}&&0.392&\textbf{0.965}&&0.318&0.902&&0.306&0.880&&0.224&\textbf{0.893}&&0.216&0.804\\ \midrule
$\frac{IXS - UNI}{UNI}$&26.6\%&5.1\%&&30.2\%&6.6\%&&18.7\%&14.5\%&&16.3\%&14.6\%&&10.3\%&30.6\%&&8.0\%&14.9\%\\ \bottomrule

 \end{tabular}
 }
\end{table*}

In this section, we present our results for classification. We shall first describe the experimental setup used for the experiments, followed by presentation of results.

A classifier is defined as follows. Assume that we have a collection of $n$ objects with corresponding features \{$x_i$\} and target label \{$y_i$\}. The goal is to learn a function $f(x)$ using the given collection that predicts a label $y$ for an unseen input $x$. Thus, the classification goal is to learn $f$ given \{ $x_i$, $y_i$\} such that $\mathbb{E}(||f(x) - y||)$ is minimized over unseen inputs $x$. Furthermore, we assume graph input for training the classifier and the target labels are available.

\subsection{Experimental setup}
\label{sub:Experimental setup2}
\setlength{\belowcaptionskip}{-10pt}
\begin{figure}[b]
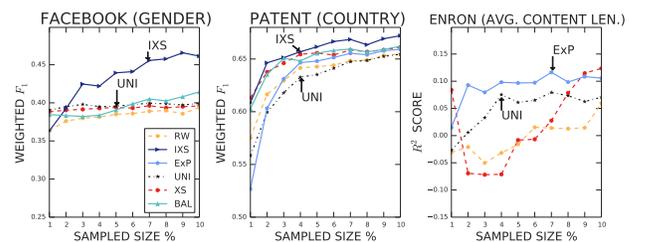

  \centering
\adjincludegraphics[width=\linewidth, trim={2cm 0cm 2cm 0cm},clip]{{real_cluster/fb_pat_enron_class}.pdf}
\caption{Classification performance on three real world datasets with their corresponding predicted attribute mentioned in the title. The classification performance on Facebook, Patent and Enron shows that IXS and its variants are better than attribute-agnostic samplers..}
\label{fig:real_classes}
\end{figure}

In this section, we provide a description of the experimental setup comprising a view of datasets used, evaluation methodologies and target label selection for classification.

\textit{Dataset description: } We test preservation of content classes on real-world and synthetic datasets. For real-world datasets, we employ three networks---Facebook, Patent and Enron. We did not test our classifier on Pokec that has too few (just two) categorical attributes, and Wikipedia that has too many (7,969) categorical attributes, in absence of a relevant prediction task. We use one of the attributes as the target label ($y$) and the rest of attributes as the features ($x$). For the synthetic datasets, we use different variations of the network generators, and different combinations of content characteristics by varying the purity, skew and assortativity~\citep{Kumar2016}. Furthermore, we use three attributes of nodes: two continuous and one discrete attribute as the features ($x$) and cluster \texttt{id} as the target ($y$).  Additionally, we normalize all continuous and discrete features using well-known z-score standardization and one-hot encoding respectively \citep{beck2000high}.

\textit{Evaluation :} The evaluation metric for classification differ slightly depending on whether the target attribute ($y$) is discrete or continuous. For predicting discrete attributes, we use the weighted $F_1$ score while $R^2$ coefficient of determination for predicting continuous attributes.

\textit{Target label selection: }We choose specific attributes to predict for real-world and synthetic datasets. The choice for picking attributes to predict for each of the real-world datasets---Facebook, Patent and Enron---was based on the following two principles. First, across the three real-world datasets, we picked attributes of varying cardinality to help us understand the effect of cardinality. Second, we employed a simple principle to identify variables used to predict the target attribute: the target attribute had to exhibit correlation to the features attributes. We note in passing that feature selection itself is often a hard task. Thus, for Facebook we predict ``gender'' using feature set {``locale'', ``education type''}. For Patent dataset, we predict attributes ``country'' in Patent and ``Average content length'' in Enron using the rest of the attributes as features. Finally for synthetic dataset, we choose to predict the cluster \texttt{id}.

\subsection{Experimental results}
\label{sub:Experimental results2}

Now, we present the experimental results of real-world and synthetic networks. We begin the results with real-world networks.

We present the classification results for real world datasets: Facebook, Patent and Enron. Attribute-aware samplers are a better choice of sampling than UNI for all classification tasks. The classification performance for Facebook, Patent and Enron (regression) is shown in \Cref{fig:real_classes}. For Facebook dataset, IXS achieves over 18\% relative gain in the weighted $F_1$ over competing samplers such as UNI and RW variants. For Patent dataset, we note that attribute-aware and attribute-agnostic samplers are better than UNI by a margin of 2\%. The overall weighted $F_1$ performance of almost all samplers is high due to skewed distribution of the target attribute (i.e. country's skew = 0.70). Similarly for Enron dataset, we observe the ExP is a better choice for sampling than RW or UNI sampler. UNI and other attribute-agnostic samplers suffer from class skew in most real-world datasets. However, attribute-aware samplers such as IXS and ExP circumvent this problem by effective diversified or surprise-driven sampling in attribute value space.

We now present classification analysis on synthetic networks. Classification and clustering results are remarkably similar for synthetic dataset. Since, our target class and clusters are exactly the same, the results are therefore very similar. From \Cref{table:classification}, we observe that for any combination of content characteristics including purity, skew and assortativity, attribute-aware samplers are significantly better than attribute-agnostic samplers. It is not surprising that IXS and other attribute-aware samplers consistently outperform UNI. This is because IXS to some extent solves the ``class balancing problem'' via stratified sampling of objects from each class. IXS achieves a performance gain of much as 30\% over the uniform baseline samplers. Furthermore, IXS does best when the classes are highly skewed, the attributes have low assortative mixing and the purity of classes is high.

\section{Related Work}
\label{sec:relatedwork}

Not surprisingly, inference from content gathered by network sampling arises in many diverse areas. We study the prior works related to our work in three different areas:  network sampling, content sampling and joint network-content sampling


``Representative subgraph'' sampling closely resembles to our methodology of sampling attributed network. Representative subgraph sampling aims to construct a sampled subgraph that has \textit{network structure} very similar to that of the original network. Forest Fire~\cite{leskovec2006sampling}  preserved several key network structure characteristics. Hubler et al. \citep{Hubler2008} showed via Metropolis algorithm that prior knowledge of the network can help in obtaining better representative samples. The objective of our work is preservation of content properties and not just the network structure.

Another line of research on network sampling focuses on understanding the biases of existing samplers and ways to obtain \textit{uniform samples}. Kurant et al. \citep{kurant2010bias}  quantified the degree bias for several network samplers and proposed new ways to correct them. Costenbander et al. \citep{costenbader2003stability} did thorough analysis of the effect of noise (sample) on network centrality estimation. Gjoka et al. \citep{Gjoka2010} implemented the proposed uniform link-trace samplers on very massive Facebook network to validate the results. Chiericetti et al. \citep{chiericetti2016sampling} proposed an efficient random walk sampling strategy for sampling according to a prescribed distribution not just ``uniform'' sampling. Maiya et al. \citep{maiya2011benefits} exploited the bias instead of correcting it to design expansion based samplers. Similar to Maiya's work, we exploit the bias of entropy based samplers to balance the attributes in the sample, yielding improved classification and clustering performance.

There has been a plethora of research on sampling content from an \textit{unknown population distribution}. Our objective resembles with these surveys that try to estimate the underlying content characteristics. However most the well known samplers such as Poisson sampling, stratified sampling, etc. \citep{patton2005qualitative} require random access to the nodes in dataset, prior knowledge in some cases, and therefore fail to capture network structure. The idea of surprise based data sampling is however not new. It has been used in the fields of graph visualization, information retrieval, active learning, etc. For example, in classical database search, Sarwagi \citep{sarawagi2000user} used the Maximum Entropy principle to model a user's knowledge and aid the user in exploring OLAP data cubes. In graph visualization work \cite{pientaadaptivenav}, the authors chose to highlight the neighbors that are most surprising in information. Our work borrows the idea of surprise defined in terms of entropy and stratified sampling principles to design better attributed-network samplers.

Sociological and statistical studies on social networks such as friendship recommendation, link prediction, attribute inference, type distribution, etc. implicitly rely upon both content and network. However there is little prior work on understanding the effect of sampling on \textit{joint network-content characteristics}. Li et al. \cite{li2011sampling} studied five different sampling strategies for node-type and link-type distribution preservation. They noted that sample size of 15\% from RDS, the best sampling strategy, can preserve ``location type'' distribution in Twitter network very well. Yang et al. \citep{yang2013semantically} proposed a semantic sampling strategy, Relational Profile sampling, that preserves the semantic relationship types in a heterogeneous networks. Park et al. \cite{park2013sampling} remarked about the inefficiency of the existing network samplers in estimating node attributes. Although seemingly similar, the previous works have been specific to tasks such as attribute distribution and node-type preservation. However, we present samplers for tasks such as clustering and classification, along-with theoretical proofs of bias and convergence. To the best of our knowledge, we are the first to propose network samplers for data mining purposes like clustering and classification.

\vspace{-5pt}
\section{Limitations}
\label{sec:Limitations}


Now, we discuss limitations of this work. First, much of the analysis assumes that we have no missing values; while the algorithms would work in the case of missing values, it would useful to introduce a noise model to formally estimate error in surprise when confronted with missing values. Second, the time and space complexity of IXS is greater than MHRW and RW. The incremental update complexity is $O(\mu \log |\mathbb{S}| + \mu^2)$, where $\mu$ is the mean degree, while it is $O(1)$ for RW or MHRW. Some of this can be mitigated by appropriate content-network structures. For example, we commonly assume that a node has access to the \texttt{id}'s of its neighbors, but not the attributes of its neighboring nodes; this can be easily rectified, reducing the incremental time complexity. Third, the analysis of the Wikipedia dataset reveals that when the attributes are numerous, binary and highly asymmetric, we need to modify the definition of surprise to handle the case, perhaps by defining an symmetric version of surprise. Finally, our model of link-trace sampling is limited: many social networks allow us to make queries on the content directly by returning network nodes that satisfy the query. It would be interesting to expand the sampling paradigm to incorporate a more rich query model.

\begin{table*}[b]
\center
\caption{ The network-structure preservation goal is to obtain the least possible KS statistic of degree, clustering coefficient and path length distributions. Attribute-agnostic samplers such as XS, FF and RW are tuned to preserve network characteristics, perform better than attribute-aware samplers. We show the network characteristics for three datasets--Facebook, Pokec and Philosopher. }\label{table:remaining}
\resizebox{\textwidth}{!}{%
\begin{tabular}{@{}c ccc ccc ccc c@{}}
 & \multicolumn{3}{c}{\textit{Facebook}} & \multicolumn{3}{c}{\textit{Pokec}} & \multicolumn{3}{c}{\textit{Wikipedia}} \\ \toprule
Samplers& Degree& CC& Path& Degree& CC& Path& Degree& CC& Path\\ \toprule
BFS&0.403&0.295&0.781&\textbf{0.123}&\textbf{0.118}&0.2765&0.362&0.338&0.552\\ 
RW&\textbf{0.334}&0.257&0.636&0.235&0.157&\textbf{0.2745}&0.332&0.216&0.405\\ 
MHRW&0.349&\textbf{0.209}&0.461&0.371&0.239&0.491&0.35&0.212&0.232\\ 
FF&0.336&0.234&0.546&0.246&0.168&0.3085&0.335&0.211&0.336\\ 
XS&0.666&0.522&\textbf{0.1}&0.219&0.271&0.766&\textbf{0.3}&\textbf{0.209}&0.396\\ \midrule
UNI&0.756&0.623&0.865&0.645&0.595&0.976&0.616&0.553&0.78\\ \midrule
ExP&N/A&N/A&N/A&N/A&N/A&N/A&N/A&N/A&N/A\\ 
BAL&0.517&0.299&0.461&0.513&0.304&0.588&0.398&0.225&\textbf{0.181}\\ 
IXS&0.454&0.424&0.326&0.518&0.26&0.631&0.486&0.414&0.626\\ \bottomrule
\end{tabular}%
}\end{table*}

\section{Conclusion}
\label{sec:conclusion}

In this paper, we have presented new attribute-aware sampling methodologies for attributed networks. The problem is important because data mining tasks such as clustering or classification are commonplace on the nodal attributes of real-world networks. A key challenge is that these large networks are often sampled with BFS or RW, which were never designed to preserve content characteristics. In the first of its kind study, we show that these samplers are suboptimal for standard data mining tasks. We proposed several samplers based on the idea of information expansion. We have excellent results with information based sampling outperforming the baselines for data mining tasks such as cluster preservation, with average-case performance improvements over 45\%.


\bibliographystyle{ACM-Reference-Format}
\bibliography{sigproc}

\end{document}